\newtheorem{definition}{Definition}
\newtheorem{proposition}{Proposition}
\begin{document}
\title{
%RECESS Vaccine for : 
%Proactively Detect 
RECESS Vaccine for Federated Learning: Proactive Defense Against Model Poisoning Attacks
}
\author{
Haonan~Yan,
Wenjing~Zhang,
Qian~Chen,
Xiaoguang~Li,
Wenhai~Sun,
Hui~Li,
and~Xiaodong~Lin
%~\IEEEmembership{Member,~IEEE}
\IEEEcompsocitemizethanks{
\IEEEcompsocthanksitem 
Haonan Yan, Qian Chen, Xiaoguang Li, and Hui Li are with the Xidian University, China.
%\protect\\
%\IEEEauthorrefmark{2} The two lead authors contributed equally to this work. 
%\protect\\
%\IEEEauthorrefmark{1} H. Li is the corresponding author. E-mail: lihui@mail.xidian.edu.cn. 
\IEEEcompsocthanksitem 
Wenjing Zhang and Xiaodong Lin are with the University of Guelph, Canada.
\IEEEcompsocthanksitem 
Wenhai Sun is with the Purdue University, USA.
\IEEEcompsocthanksitem 
% F. Li is with the State Key Laboratory of Information Security, Institute of Information Engineering, Chinese Academic of Sciences, Beijing, China, and also with the School of Cyber Security, University of Chinese Academy of Sciences, Beijing, China.
% \IEEEcompsocthanksitem 
%\IEEEauthorrefmark{2} The two lead authors contributed equally to this work. 
%\protect\\
%\IEEEauthorrefmark{1} 
Xiaoguang Li and Hui Li are the corresponding authors. 
%E-mail: xg\_li@outlook.com, lihui@mail.xidian.edu.cn.

}% <-this % stops an unwanted space
%\thanks{Manuscript received April 19, 2005; revised August 26, 2015.}
}

\IEEEtitleabstractindextext{%
\begin{abstract}
Model poisoning attacks greatly jeopardize the application of federated learning (FL).
The effectiveness of existing defenses is susceptible to the latest model poisoning attacks, 
%bringing 
leading to a decrease in prediction accuracy.
%suffering from 
Besides, these defenses are intractable to distinguish benign outliers from malicious gradients, which further compromises the model generalization.
In this work, we propose a novel proactive defense named {\sf RECESS} against model poisoning attacks.
%Different from  relying on , 
Different from the passive analysis in previous defenses, {\sf RECESS} proactively queries each participating client with a delicately constructed aggregation gradient, accompanied by the detection of malicious clients according to their responses with higher accuracy.
%Further, 
%%finding the malicious client, 
%{\sf RECESS} adopts a newly proposed trust scoring based mechanism to robustly aggregate gradients.
%Rather than previous methods of scoring in each iteration,
%%Instead of previous methods  
%{\sf RECESS} takes into account the correlation of clients' performance over multiple iterations to estimate the trust score, 
%bringing in a significant increase in 
%%...which increases the 
%detection fault tolerance.
Furthermore, RECESS uses a new trust scoring mechanism to robustly aggregate gradients. Unlike previous methods that score each iteration, RECESS considers clients' performance correlation across multiple iterations to estimate the trust score, substantially increasing fault tolerance.
%Different from previous aggregation rules 
%The advantage of the mechanism is 
%To increase fault tolerance, we also 
%apply {\sf RECESS} into the robust aggregation rule and propose , where our advantage is 
%
%consider 
%
%correlation among multi iteration
%
%single iteration
Finally, we extensively evaluate {\sf RECESS} on typical model architectures and four datasets under various settings.
We also evaluated the defensive effectiveness against other types of poisoning attacks, the sensitivity of hyperparameters, and adaptive adversarial attacks.
Experimental results show the superiority of {\sf RECESS} in terms of reducing accuracy loss caused by the latest model poisoning attacks over five classic and two state-of-the-art defenses.

%\keywords{
%Federated Learning \and 
%Model Poisoning Attack \and 
%Proactive Detection \and 
%Robust Aggregation \and
%Outlier Detection.
%}
\end{abstract}

\begin{IEEEkeywords}
Federated Learning, 
Model Poisoning Attack,
Proactive Detection,
Robust Aggregation,
Outlier Detection
\end{IEEEkeywords}

}
%\maketitle
%\pagestyle{plain}

\maketitle 

\IEEEdisplaynontitleabstractindextext
\IEEEpeerreviewmaketitle

\IEEEraisesectionheading{\section{Introduction} \label{Introduction}}

\IEEEPARstart{R}{ecently},
federated learning (FL) goes viral 
as a privacy-preserving training solution with the distributed learning paradigm \cite{li2020federated}, since data privacy 
attracts increasing
%receiving increased
%is getting more and more 
attention from organizations like banks \cite{yang2019ffd} and hospitals \cite{xu2021federated}, governments like GDPR \cite{truong2021privacy} and CPPA \cite{lackey2021data}, and commercial companies like Google \cite{mcmahan2017communication}. 
FL allows data owners to collaboratively train models under the coordination of a central server for better prediction performance by sharing local gradient updates instead of their own private/proprietary datasets, 
%not only  the share of training data for each participant but also 
preserving the privacy of each participant's raw data.
FL is promising as a trending privacy training technology. 
%ensures the protection of their privacy.
%For example, different banks can use FL to collectively construct a fraud detection system for credit card fraud without sharing the sensitive information of cardholders \cite{yang2019ffd}.
%Hospital 
%Google Apple 

However, FL is vulnerable to various model poisoning attacks \cite{guerraoui2018hidden,bhagoji2019analyzing}.
Due to its distributed characteristics, the local training datasets are unverifiable, so the attacker can upload malicious model updates to corrupt the aggregation global model,
%in this work we focus on model poisoning attacks.
%The model poisoning attacks 
which greatly limit the application of FL. 
%The distributed structure of FL characterized by the privacy-preserving local training paradigm 
%is vulnerable to model poisoning attacks, since 
%makes it impossible to verify overall gradient updates and distributions of local datasets.
Accordingly, the attacker can corrupt the global model by hijacking compromised participants, and then uploading malicious local gradient updates, leading to a reduction in the final model's prediction performance, which significantly 
%jeopardizes the application of FL.
hinders the flourishing of FL.

%\paragraph{Limitations of Previous Works.}
To mitigate model poisoning attacks, 
many byzantine-robust aggregation rules,
e.g., Krum, Mkrum, Trmean, Median, and Bulyan \cite{blanchard2017machine,yin2018byzantine,guerraoui2018hidden}, 
are proposed to remove malicious gradients by statistical analyses.
Two state-of-the-art (SOTA) defenses, FLTrust \cite{Cao2021FLTrustBF} and DnC \cite{Shejwalkar2021ManipulatingTB}, are also proposed to further enhance the robustness of the FL system.
Even so, these defenses 
%suffer from low accuracy 
%because they 
are susceptible to the
%caused by the inability to defend the
%ineffectiveness against 
latest optimization-based model poisoning attacks
% which are adopting the optimization program 
\cite{fang2020local,Shejwalkar2021ManipulatingTB}, leading to a decrease in prediction accuracy.
%are ineffective against 
%show low are still unable to effectively resist 
%the } due to  an .
%Fang {\it et al.} propose an adaptive attack by formulating the attack as an optimization problem
%Shejwalkar {\it et al.} further expand this optimization problem into a series of AGR attacks under various threat models. 

Besides, 
there is an irreversible generalization loss in 
these defenses since they cannot distinguish malicious gradients and benign outliers, reducing the generalization performance of the final trained model.
Due to the unbalanced and not independently and identically distributed (Non-IID) local datasets across participating clients, 
%Considering that the local datasets across participating clients are Non-IID, 
it is common to have many biased updates which are statistical outliers.
These statistical outliers are still benign updates (called \textit{benign outliers}), which are helpful in improving the generalization performance of the final FL model. 
However, existing defenses typically rely on statistical majorities to remove malicious clients, which usually misclassifies the benign outliers as malicious clients and remove important gradient contributions from outliers directly, thereby compromising the generalization of FL.
%also implicates these benign outliers.
% that are in the minority.
\iffalse
by leveraging the statistical majority, existing statistical analysis-based defenses
%byzantine-robust aggregation rules 

%(有点儿想加irreversible， 但是我有点儿不确定这种损失算不算不可挽回，按说后面的训练是有可能调整回来的) accuracy loss.
%existing defenses are not able to distinguish the outliers from malicious clients because their design only considers the statistically majority and removes outliers directly, causing accuracy loss.
\fi
To the best of our knowledge, the work of tackling benign outliers still remains open in FL.
%to solve this problem.

\iffalse
{\bf FL被用来保护数据隐私}
由于在联邦学习过程中训练数据仍然保存在参与者本地，这种机制既能实现对各参与者训练数据的共享，又能保证对每个参与者隐私的保护
Li T， Sahu A K， Talwalkar A， et al. Federated learning： Challenges， methods， and future directions［J］. IEEE Signal Processing Magazine， 2020， 37（3）： 50-60.

{\bf 数据隐私性和分布式特性导致更容易受到到投毒攻击}
由于联邦学习框架中的云服务器没有访问参与者本地数据及其训练过程的权限，使得恶意参与者可以上传不正确的模型更新以达到并破坏全局模型的目的。例如，内部攻击者可以通过已经修改后的训练数据训练得到的投毒模型更新，达到有效影响全局模型准确性的目的。Jagielski M， Oprea A， Biggio B， et al. Manipulating machine learning： Poisoning attacks and countermeasures for regression learning［C］//Proceedings of 2018 IEEE Symposium on Security and Privacy （SP）. San Francisco， CA， USA： IEEE， 2018： 19-35.
修改、删除训练数据或者向训练数据中嵌入恶意数据，以达到破坏原始数据的初始分布和改变学习算法逻辑的目的。

由内部实体（如恶意用户和不受信任的服务器）发起的强大攻击具有更强的威胁性 
Yuan X， He P， Zhu Q， et al. Adversarial examples： Attacks and defenses for deep learning［J］. IEEE Transactions on Neural Networks and Learning Systems， 2019， 30（9）： 2805-2824.

{\bf 数据分布不平衡导致outliers}

在某个特定设备上的训练样本通常遵循该移动用户的使用偏好，因此多个训练数据的分布特征是不平衡的（或者为非独立同分布）。例如，在一个单词预测任务中，潜在的训练数据可能包括该用户在此移动设备上输入的任何形式的信息（包括URL、密码和消息等形式）
Yang K， Jiang T， Shi Y， et al. Federated learning via over-the-air computation［J］. IEEE Transactions on Wireless Communications， 2020， 19（3）： 2022-2035.。
这些数据的特征分布很可能在多个移动用户之间具有很大差异，并且个体用户的数据特征分布与公共训练数据集（例如Wikipedia和其他数据集）特征分布的差异可能更为明显 
Wang X， Han Y， Wang C， et al. In-edge AI： Intelligentizing mobile edge computing， caching and communication by federated learning［J］. IEEE Network， 2019， 33（5）： 156-165.。
\fi

%\paragraph{Our work.}

To improve model accuracy and generalization against latest model poisoning attacks in FL, we propose {\sf RECESS} to p{\sf R}oactively d{\sf E}te{\sf C}t mod{\sf E}l poi{\sf S}oning attack{\sf S}.
Unlike previous defenses using passive analysis and direct aggregation in a single iteration,
{\sf RECESS} \textit{proactively detects} malicious clients and \textit{robustly aggregates} gradients with a new trust scoring based mechanism.
The differences 
among previous defenses and our {\sf RECESS} 
are illustrated in Fig. \ref{fig:intro}. 
% consists of detection and aggregation.
\begin{figure*}
	\centering
	\includegraphics[width=\linewidth]{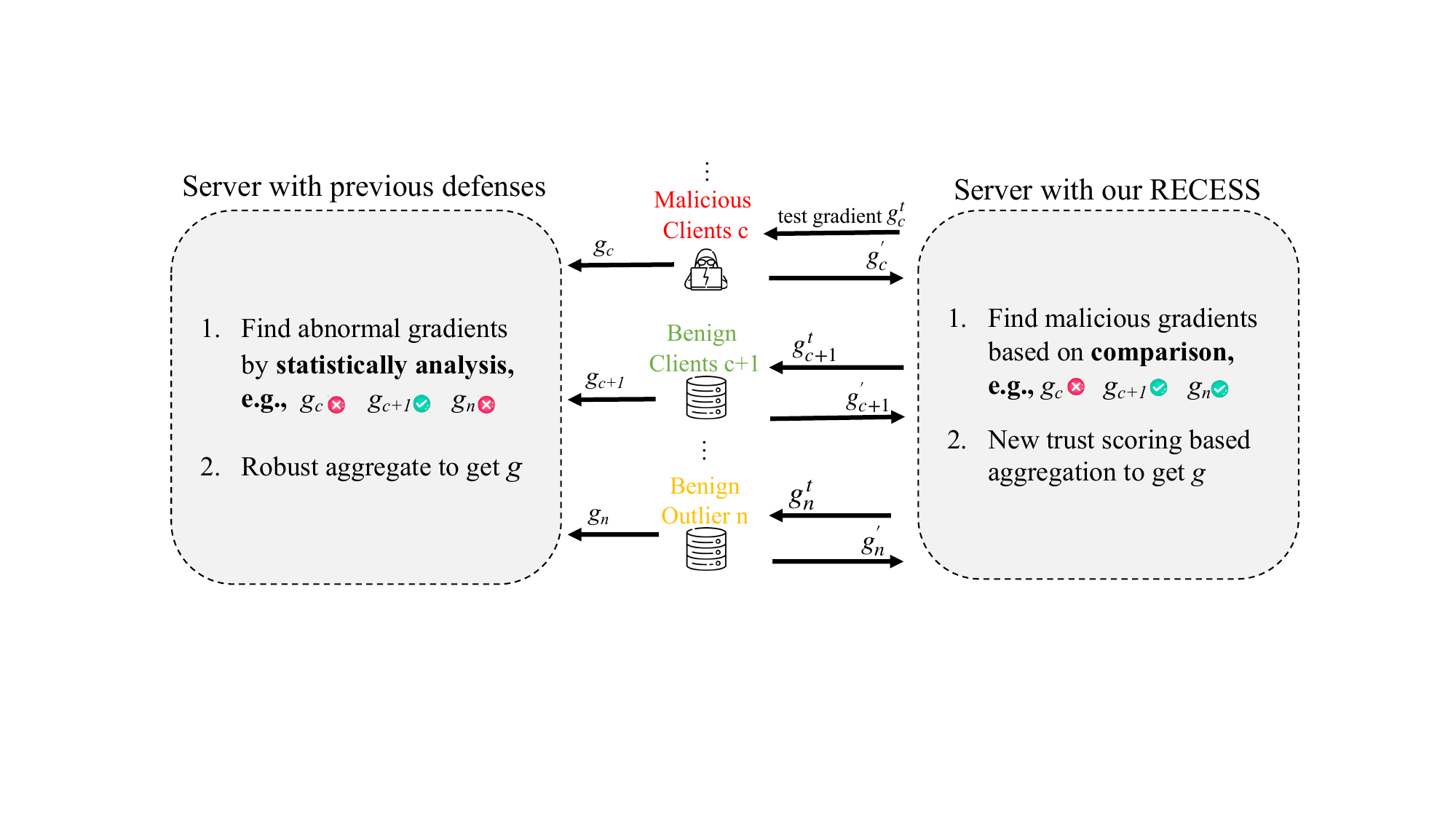}
	\caption{
	An intuitive example to illustrate the difference between previous defenses and {\sf RECESS}. In an FL system with $n$ clients, suppose the first $c$ clients are malicious.
	Previous defenses find the malicious gradients by abnormality detection of all received gradients, while {\sf RECESS} confirms the malicious gradients by comparing the constructed test gradient with the corresponding client's response.
	Thus, {\sf RECESS} can distinguish between malicious clients and benign outliers, 
%	bringing the final trained model with higher accuracy and generalization.   
	achieving higher accuracy in the final trained model.
	}
	\label{fig:intro}
\end{figure*}

In \textit{proactive detection}, our key insight is that the update goals of malicious clients are different from benign clients and outliers. 
Benign clients, including outliers, always tend to optimize the received aggregation gradient towards the distribution of their local datasets, thus uploading directive and stable gradient updates in the whole process.
%despite some benign outliers might deviate slightly.
%, which are directive and stable.
%Benign clients including outliers always try to train the global model by uploading driective and stable local updates trained on their local data along the whole FL process, although some benign outliers from outlier clients might deviate sightly.
However, malicious clients aim to deviate the aggregation result to a diverging direction of the aggregation gradient, resulting in inconsistent and sharply changing uploaded gradients.
Therefore, the defender can amplify this difference by delicately constructing aggregation gradients sent to clients, and then more accurately detect malicious clients and distinguish benign outliers by the comparison between constructed gradients and the corresponding responses. 

In \textit{robust aggregation}, 
%based on the {\sf RECESS} detection algorithm,
we propose a new trust scoring based mechanism 
to aggregation gradients,
%an effective    based on a newly proposed 
%to handle the issues mentioned above.
%Our mechanism 
which estimates the score according to the user's performance over multiple iterations, rather than scoring each iteration as in previous methods, which improves the detection fault tolerance.
%can improve the accuracy of the FL model against model poisoning attacks, and effectively distinguish malicious gradients from outliers.

Finally, we compare {\sf RECESS} with five classic and two SOTA defenses under various settings,
and the results demonstrate that {\sf RECESS} is effective to 
overcome the two limitations aforementioned.
%improve the two limitations mentioned above.

%\paragraph{Contribution.} 
The main contributions are:
\begin{enumerate}[(a)]
	\item We propose a novel defense against model poisoning attacks called {\sf RECESS}. It offers a new defending angle for FL and turns the defender from passive analysis to proactive detection, which defeats the latest model poisoning attacks.
	{\sf RECESS} can also identify benign outliers effectively.
%	increases the accuracy and generalization of FL. 
	\item We improve the robust aggregation mechanism. 
%	We apply the {\sf RECESS} into countermeasure against model poisoning attack. 
	A new trust scoring method is devised by considering
%	involving 
	clients' abnormality over multiple iterations, which significantly increases the detection fault tolerance of {\sf RECESS} and further improves the model's accuracy.
%	 and generalization.  
%	Our mechanism is also effective for benign outliers.
	\item We evaluate {\sf RECESS} under various settings.
%	\item , including white/black-box, Non-IID degree of the dataset,  number of malicious clients, and cross-silo/device FL.
	Comprehensive experimental results show that {\sf RECESS} outperforms previous defenses in terms of accuracy loss and achieves consistent effectiveness. 
%	the results demonstrate that our {\sf RECESS} is more effective.
\end{enumerate}

\section{Related Works}

\subsection{Federated Learning}

%\paragraph{Aggregation Rule}

%\paragraph{FedAvg.}
\textbf{FedAvg.} 
FedAvg \cite{mcmahan2017communication} is a commonly used aggregation rule for FL.
%popular 
%algorithm to provide the best performance for FL.
%The aggregate gradient is a weighted average of 
%other client gradient values,
%It weighted averages the 
%each client's updated gradient, 
The aggregation gradient is a weighted average of each client's upload gradient,
%as the aggregation gradient, 
and the weight is determined by the number of training data. 
However, the aggregation gradient, i.e., the global model, is vulnerable to poisoning attacks, 
%which makes the global model unable to converge or be low accuracy, 
resulting in a global model with poor prediction performance
% or even a diverging global model
\cite{blanchard2017machine}.

\subsection{Poisoning Attacks in FL}
From the perspective of the attacker's goal, 
poisoning attacks are categorized as targeted and untargeted attacks.
The targeted attack \cite{bagdasaryan2020backdoor,bhagoji2019analyzing} aims to 
mislead the global model to misclassify samples of one 
%'s prediction results of the 
attacker-chosen class,
% such as the backdoor attack, 
does not affect other classes.
%while other samples are unaffected.		
The untargeted attack \cite{mahloujifar2019universal,guerraoui2018hidden,xie2020fall} aims to increase the testing error of the global model for all classes.
These poisoning goals can be accomplished by different operations, which are mainly classified into
%Depending on the attacker's method,
%there are mainly 
two categories, data and model poisoning attacks.
Data poisoning attacks \cite{jagielski2018manipulating,munoz2017towards} contaminate the training dataset to mislead the training process of local models in FL.
For example, the label flipping attack altering the data's label to mislead the training of the model.
Model poisoning attacks \cite{fang2020local,Shejwalkar2021ManipulatingTB} directly poison the gradient updates to corrupt the server's aggregation.
%\subsection{Latest Model Poisoning Attacks in FL}
%There are many poisoning attacks in FL.
%FL is vulnerable to various poisoning attacks.

In this work, we focus on the stronger untargeted model poisoning attacks for three reasons:
%(a) model poisoning attack is more direct and threatening than data poisoning attack in FL \cite{bhagoji2019analyzing,fang2020local}.
%(b) data poisoning attack is a special case of model poisoning attack since malicious gradient updates can be obtained on poisoning datasets \cite{Cao2021FLTrustBF}.
%(c) untargeted poisoning is a more severe threat to FL than targeted poisoning \cite{Shejwalkar2021ManipulatingTB}.
\begin{enumerate}[(a)]
\item Model poisoning attacks with the direct manipulation of gradients are more threatening than data poisoning attacks in FL \cite{bhagoji2019analyzing,fang2020local}.
\item Data poisoning attacks are considered a special case of model poisoning attacks as malicious gradients can be obtained on poisoning datasets \cite{Cao2021FLTrustBF}.
\item Untargeted poisoning is a more severe threat to model prediction performance than the targeted one in FL \cite{Shejwalkar2021ManipulatingTB}.
%\begin{enumerate}[(1)]
%\item Model poisoning attacks with the direct manipulation of gradients are more threatening than data poisoning attacks in FL \cite{bhagoji2019analyzing,fang2020local}.
%
%\item Data poisoning attacks are considered a special case of model poisoning attacks as malicious gradients can be obtained on poisoning datasets \cite{Cao2021FLTrustBF}.
%
%\item Untargeted poisoning is a more severe threat to model prediction performance than the targeted one in FL \cite{Shejwalkar2021ManipulatingTB}.
\end{enumerate}
%More details about poisoning attack in FL can be found in Appendix \ref{appendix_related}.
In the following, we introduce three latest model poisoning attacks.
%the LIE attack, the latest optimization-based attack, and its strongest extended version of the AGR attack series.
%We describe the AGR attack series with three specific attack instances.
%, which is consistent with the original paper.

%Thus, we review the classic LIE attacks and the sota optimization-based attack(Fang's) followed by the NDSS的attacks，which are all popular untargeted model poisoning attacks

\textbf{LIE Attack.}
The LIE attack \cite{baruch2019little} lets the defender remove the non-byzantine clients and shift the aggregation gradient by carefully crafting byzantine values that deviate from the correct ones as far as possible.
This attack does not need to know the information of other benign clients.

\textbf{Optimization Attack.}
\cite{fang2020local} propose a new attack idea. They formulate the model poisoning attack as an optimization program, where the objective function is to let the poisoning aggregation gradient be far from the aggregation gradient without poisoning.
The solution is the crafted malicious gradients.
%The optimization attack achieved the best attack effect at that time.
However, they obtain the solution by halving the search without optimization, so their attack effect is not optimal.
%The attacker estimate the benign gradients based on 
%By leveraging the halving search, they obtain a crafted malicious gradient to poison. 
Though their solution is not optimal, this optimization attack achieved the best attack impact at that time. 

\textbf{AGR Attack Series.}
\cite{Shejwalkar2021ManipulatingTB} then improve the optimization program by introducing perturbation vectors and scaling factors.
%Based on this, the latest AGR attack series with 
They put the perturbation vector into the optimization objective, which is able to repeatedly fine-tune the malicious gradient for maximum attack effect. 
They propose three specific attack instances (AGR-tailored, AGR-agnostic Min-Max, and Min-Sum) .
For the AGR-tailored attack, when the attacker knows the servers' aggregation rules, they can tailor the objective function to the known rules and maximize the perturbation.
When the attacker does not know the rules, the objective is to let the poisoning gradient be far away from the majority of benign updates and also circumvent the defender's detection.
In the AGR-agnostic Min-Max attack, the malicious gradient's maximum distance from any other gradient is less than the distance between any two benign gradients.
In the AGR-agnostic Min-Sum attack, the distance is replaced by the sum of squared distances, which is more concealed.
They are considered the strongest poisoning attacks now.
They also consider two scenarios in which the attacker has or dose not have the knowledge of the aggregation rules and propose corresponding attacks.

%Then three instances are proposed, which maximize the deviation between benign gradients and malicious gradients. 

\textbf{MPAF Attack.}
The above attacks are for compromised genuine clients. 
Recent attack called MPAF \cite{cao2022mpaf} uses fake clients.
MPAF injects fake clients rather than compromising victim clients, allowing a higher ratio of malicious users. In addition, MPAF uses a very simple attack method that merely attempts to drag the global model towards an attacker-chosen low-accuracy model, with a fixed local update goal and without optimization during poisoning. RECESS does not consider this type of poisoning attack, mainly because:
\begin{enumerate}[(a)]
	\item The attacker's assumptions do not conform to real-world FL requirements. We have concerns about its practical feasibility due to the strong assumption of arbitrary fake client injection required. To the best of our knowledge, satisfying such assumptions is highly challenging in practice, as evidenced by the following literature \cite{shejwalkar2022back}.
	\item Even if these assumptions held, basic Byzantine-robust aggregation rules could readily defeat the attack. Our proposal can be enhanced by using a common Median or Krum instead of weighted averaging in RECESS to defend against MPAF.
	\item Although MPAF submits malicious gradients, it is essentially a data poisoning attack before the FL system starts. Existing detections during training, like FLDetector, also do not consider such attacks.
\end{enumerate}

Given our reservations on the practicality of this attack, we are worried that elaborating on it in depth might give readers the false impression that it poses a serious threat. 
This paper focuses on defending against attacks under more realistic assumptions that are standard in the field.

%\paragraph{AGR-agnostic  attack} 

\subsection{Existing Defenses}
\subsubsection{Byzantine-robust Aggregation Rules.}\label{sec:bagg}
We first review five classic byzantine-robust aggregation rules, which are regarded as the SOAT passive defenses.

\textbf{Krum.}
\cite{blanchard2017machine} propose 
a majority-based approach called Krum.
Suppose there are $n$ clients and $c$ malicious clients, Krum selects one gradient that minimizes the sum of squared distances to $n-c-2$ neighbors as the final aggregation result.

\textbf{Mkrum.}	
\cite{blanchard2017machine} propose a variant of Krum called Mkrum.
Mkrum iteratively uses the Krum function $m$ times to select the set of gradients without put-back, and the final aggregation gradient is the mean of the selection set.
Note that Mkrum is Krum when $m =1 $, and Mkrum is FedAvg when $m=n$.

\textbf{Trmean.}
 \cite{yin2018byzantine} improve the FedAvg and propose coordinate-wise Trmean.
Trmean removes the smallest and largest $k$ gradients and averages the remaining gradient as the aggregation result. 

\textbf{Median.}
\cite{yin2018byzantine} propose another coordinate-wise aggregation rule called Median.
This rule uses the median value of each dimension value of all upload gradients as the final output.

\textbf{Bulyan.}
 \cite{guerraoui2018hidden} combine the Krum and Trmean into a new robust aggregation rule called Bulyan.
Bulyan first selects several gradients with Krum and then takes the trimmed mean value of the selected set as the final result.

\subsubsection{Other Defense.}
\textbf{Gradient Clipping.}
 The gradient clipping strategy directly normalizes the gradient, which severely slows down the training, while {\sf RECESS} does not modify the gradient itself, 
 we only consider the magnitude as one of the factors to measure abnormality. 
 Furthermore, clipping is a one-size-fits-all approach that does not consider benign outliers, reducing the model generalization, and cannot effectively resist the latest poisoning attacks, while {\sf RECESS} addresses these issues.

\subsubsection{SOTA Defenses.}\label{sec:ed}
%, and then 
There are two directions to defend against the model poisoning attack: robust aggregation and anomaly detection.
%Five classic byzantine-robust aggregation rules (see  Appendix \ref{appendix_related}) and FLTrust 
Five classic byzantine-robust aggregation rules and FLTrust 
belong to the robust aggregation. 
DnC is one of the representative anomaly detectors.
%The description of other five classic byzantine-robust aggregation rules are leave in Appendix \ref{appendix_related}.
Here we mainly describe two SOTA defenses.

%\paragraph{ Defenses}\label{sec:sagg}
\textbf{FLTrust.}
%devises FLTrust to mitigate the poisoning attack.
%In FLtrust, the server maintains a small clean dataset and 
%act as a client to participate in the update, 
%%participates in the update acting like a client, 
%adding the deployment constraint.
%In each iteration, clients' gradients whose direction deviates more from the gradient of the server are discarded.
\cite{Cao2021FLTrustBF} involves the server with a small dataset to participate in each iteration and generate a gradient benchmark in each iteration. Updates far from the benchmark will be reduced in aggregation weights. 
Moreover, FLtrust normalizes the gradient magnitude to limit the impact of malicious gradients and achieves a competitive accuracy.
However, a small dataset is less representative, especially in Non-IID cases it is insufficient to represent all benign outliers where small datasets are less representative.
Thus, FLTrust would discard benign outliers.
In fact, using a static benchmark (e.g., median, FLTrust) to detect the malicious client will always implicate benign outliers.
Nevertheless, {\sf RECESS} improves this limitation through proactive detection and multi-round evaluation. 

\textbf{DnC.}	
\cite{Shejwalkar2021ManipulatingTB} leverages the spectral method to detect malicious gradients, which is proven effective in centered learning.
They also use random sampling to reduce memory and computational cost.
Note we will evaluate these two defense mechanisms, which have rigorous theoretical guarantees, as baselines.

\section{Threat Model}
We present our threat model,
, including the goal, capability, and knowledge of both the attacker and defender.
%which is consistent with previous SOTA defenses.
% for the attacker.
% and commence the work under different threat models.
To be fair to compare with previous works \cite{Cao2021FLTrustBF, Shejwalkar2021ManipulatingTB}, we consider various knowledge of attackers comprehensively and commence the work under the same threat model.
We do not weaken the attacker with any additional disadvantage.

\subsection{Attacker}
\subsubsection{Attacker's Goal.}
We consider the untargeted model poisoning attack, where the attacker aims to increase the testing error rate of the global FL model.

\subsubsection{Attacker's Capability.}

The attacker controls several clients in the FL to poison the aggregation. 
In each iteration, the attacker has the full training information about controlled malicious clients.
The gradient updates of controlled clients can be manipulated by the attacker.
%The attacker cannot directly control the server.
%The case that attacker compromises the server is not considered.
The attacker is also able to manipulate the gradient updates of the controlled clients and cannot control the server directly. 
In addition, the attacker is assumed to not compromise the server.
%We do not consider the case that the attacker compromises the server.

\subsubsection{Attacker's Knowledge.}
We consider 
%two dimensionalities of FL knowledge: other benign gradients and the server's aggregation rule.
the knowledge of other benign gradients and the server's aggregation rule in FL.
The attacker knows other benign gradients in the white-box setting, but not in the black-box setting.
%We differentiate the attacker's knowledge about other benign gradients in white-box and black-box settings.
Knowing the aggregation rule depends on the design of the attack.
%The availability of the aggregation rule is based on the attack's design.
In this work, the optimization attack and the AGR-tailored attack need to know the aggregation rules in their design, but the LIE attack, AGR-agnostic Min-Max, and Min-Sum attacks do not need to know.

We evaluate {\sf RECESS} and other defenses under the white/black-box setting.
The attacker may also know uploaded gradients of other benign clients to construct more powerful poisoning attacks.
Similarly, if the attacker knows the aggregation rule, the poisoning attacks would be more threatening.
The pure white-box scenario is also considered where the attacker knows the existence of {\sf RECESS} and how it works.

\subsection{Defender}
\subsubsection{Defender's Goals.}
The defender has two goals:
%(a) resist model poisoning attack.
%The detection scheme should find the malicious poisoning clients in time and accurately, and minimize the impact of attacks to no attacks.
%%In other words, with this scheme, the FL system can obtain the final global model consistent with the non-toxic scenario even under powerful model poisoning attacks.
%(b) guarantees training accuracy.
%The defense should ensure aggregation performance and model accuracy in FL training.
%Besides, the scheme should distinguish the malicious clients and the outlier clients. Benign outliers should not be implicated to ensure the generalization of the final global model, which is not considered in previous works. 
%(c) reduce overhead.
%The scheme should be efficient with negligible extra computation and communication costs, especially for each resource-constrained client.
\begin{enumerate}[(1)]
\item Resist model poisoning attack.
The defender should find the malicious clients accurately, and minimize the accuracy loss caused by model poisoning attacks.
%impact of attacks to no attacks.
%In other words, with this scheme, the FL system can obtain the final global model consistent with the non-toxic scenario even under powerful model poisoning attacks.
\item Guarantee model's training accuracy.
The defender should improve the aggregation for competitive accuracy, especially when there is no attack (known as fidelity).
Besides, 
the defender should \textbf{not} misclassify benign outliers,
%as malicious,
%provide stronger guarantees that benign outliers should not be implicated as malicious,
% should distinguish the malicious clients and the benign outliers 
% of the final global model, 
which is not considered in previous works. 
%\item Reduce overhead.
%The scheme should be efficient with negligible extra computation and communication costs, especially for each resource-constrained client.
\end{enumerate}
%Previous byzantine-robust aggregation rules in Section \ref{sec:bagg} can not satisfy both two 
%%all three 
%goals at the same time due to the development of poisoning attacks. The two SOTA defenses in Section \ref{sec:sagg} can basically meet the two goals, but our {\sf RECESS} outperforms in all two dimensions.
Previous defenses can not satisfy the second goal since they are based on statistical analysis, and {\sf RECESS} outperforms them in both goals.

\subsubsection{Defender's Capability.}
The defender can exercise rights on behalf of the server, including accepting gradients from clients, 
selecting arbitrary gradients to aggregate,
and sending the result back to arbitrary clients in each iteration.
We also consider a more practical setting where the defender can examine the uploads of clients without 
%In our {\sf RECESS}, the defender does not need to
maintaining a clean root dataset on the server side like \cite{Cao2021FLTrustBF}, where the root dataset must be close to the local training datasets.

\subsubsection{Defender's Knowledge.}
The defender knows the aggregation rule.
But, the defender does not know the training dataset of local clients and the number of malicious clients.

\section{Our defense}
We propose a new proactive defense against model poisoning attacks in FL called {\sf RECESS}.
We first take an overview of {\sf RECESS}.
Then we introduce the proactive detection and the new trust scoring based aggregation mechanism respectively, followed by the extension of the new defined metric and effectiveness analysis. %and cost of {\sf RECESS}

\subsection{Overview}
\subsubsection{Intuition.}\label{sec:intui}
{\sf RECESS} aims to distinguish malicious and benign gradients without implicating benign outliers.
The essential difference 
between benign and malicious clients 
is that:
benign clients including outliers always optimize the received aggregation result to the distribution of their local dataset to minimize the loss value,
%for smaller loss, 
while malicious clients aim to make the poisoned aggregation direction as far away from the aggregation gradient without poisoning as possible to maximize the poisoning effect.
In other words, benign gradients explicitly point to their local distribution, which is directional and relatively more stable, while the malicious gradients solved based on other benign gradients change inconsistently and dramatically.
%be  and dramatic change frequently and .
The workflow of \textsf{RECESS} is shown in Fig. \ref{framework}.

\begin{figure}
	\centering
	\includegraphics[width=\linewidth]{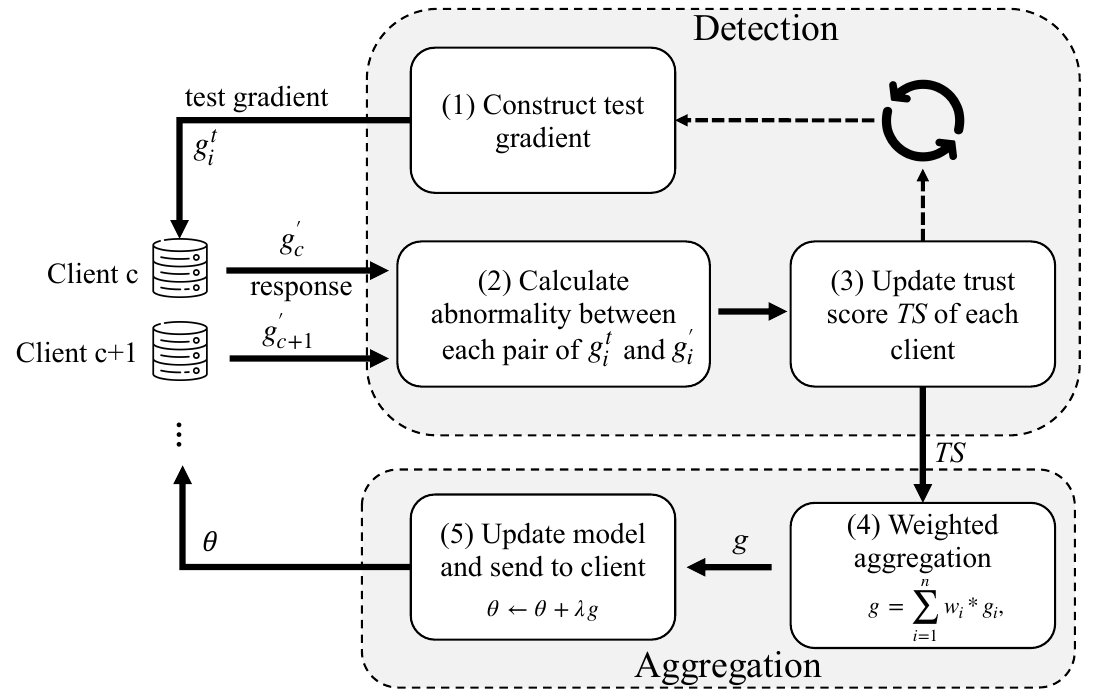}
	\caption{The overall detection process of \textsf{RECESS} against model poisoning attacks.}
	\label{framework}
\end{figure}

\subsubsection{Proactive Detection.} 
{\sf RECESS} is different from previous passive analyses.
The defender using {\sf RECESS} proactively tests each client with elaborately constructed aggregation gradients, rather than the model weights to clients, and these two approaches are equivalent algorithmically. 
After clients fine-tune local models with received aggregation gradients and respond with new gradient updates, 
the defender can recognize malicious clients based on the abnormality of their gradient updates.
We also redefine the abnormality in poisoning scenarios to distinguish malicious gradients and outliers, which promotes the generalization of the FL model.
Details are shown in Section \ref{sec:Alg}. 
%shows the details of the construction strategy and the detection method. 

\subsubsection{Robust Aggregation.} 
{\sf RECESS} adopts a new trust scoring based aggregation mechanism to improve the defense.
A new scoring method is proposed to give weights to aggregate clients' gradients.
Updates with greater abnormal extent account for a smaller proportion in aggregation, which increases fault tolerance.
Details are shown in Section \ref{sec:scheme}.

\subsection{Proactive Detection Algorithm}\label{sec:Alg}
%{\sf RECESS} proactively detects malicious gradients by amplifying their inconsistency.
%aims to amplify the inconsistency of malicious gradients to detect the abnormality. 
We firstly show the construction strategy of the test gradient sent to clients, and then 
detect the abnormality of gradient from two dimensions.

%give the detection dimension 
%of poisoning attack 
%based on the ' . 
\subsubsection{Construction Strategy of Test Gradient.}
The purpose of the defender is to observe the client's response by adding a slight perturbation to the test gradient. 
Thus, any slight varying gradient is feasible.
To illustrate, we present a strategy based on previous gradients.
%We prepare t
%The strategy for the server to construct gradient tests for clients, which 
%The defender crafts the test gradients using previous gradients.
In details, 
%The process is shown below:
the defender firstly stores the aggregation gradient in the last iteration.
When entering the detection mode, for example in $(k-1)$-th round of detection, the defender idles this iteration, and for each uploaded gradient $\bm{g}_i^{k-1}$ from $i$-th client ($0< i \leq n$, $n$ is the number of clients), 
%the defender scale down the magnitude of $\bm{g}_i^{k-1}$, i.e., $\|{\bm{g}_i^t}\|_2=\beta\cdot \|\bm{g}_i^{k-1}\|_2$ ($0<\beta<1$), 
the defender scale down the magnitude of $\bm{g}_i^{k-1}$, i.e., $\|{\bm{g}_i^t}\|_2 = \bm{g}_i^{k-1}/ \|\bm{g}_i^{k-1}\|_2$, 
and slightly adjust the direction of 
%$\bm{g}_i^{k-1}$ as 
${\bm{g}_i^t}$ by heuristically selecting several elements and adding noise.
Here we set a threshold $\mathcal{A}$ to control the direction adjustment.
% and the value choice will be empirically analyzed in Appendix \ref{sec:threshold}. 
% i.e., the angular change $\leq \mathcal{A}$. 
We also discuss the setting of the threshold $\mathcal{A}$ in the experiments. 
%It also works even without adjusting direction.
Then, the defender feedbacks ${\bm{g}_i^t}$ as the aggregation gradient to client $i$. 
The client $i$ updates this tailored “aggregation gradient” locally and respond with a newly updated gradient $\bm{g}_i^{k}$.
The defender can perform poisoning detection by the comparison between ${\bm{g}_i^t}$ and $\bm{g}_i^{k}$.
%and waits for new responses $\bm{g}_i^{k}$ of client $i$ to perform detection.
%, where $\bm{g}_i^{k+1} = \bm{g}_i^k - $
This process will be repeated for each client.
% sends the same last aggregation gradient to clients, 
Besides, to actively test malicious clients, the server with {\sf RECESS} deployed needs to send gradients rather than the model weight to clients, and these two approaches are equivalent algorithmically. 

{\sf RECESS} slightly adjusts constructed aggregation gradients, which magnifies the variance of malicious gradients and makes them more conspicuous,  
but benign clients and outliers are not affected.
The reason is:
\iffalse
For benign clients including outliers, 
%their new gradient updates change little.
%This only scales down the magnitude of gradients, thus 
their next updates are basically along with the directions of the last update.
%The changes brought by strategy 2 are bigger than the ones of strategy 1.
This is due to 
they always point to the distribution of their local datasets to reduce the loss of optimization.
Even though the directions of the new benign gradients deviate from the previous direction (optimization algorithms, like stochastic gradient descent, cause slight deviations 
even on the same gradient input
%in the results after each iteration of the same gradient input
), they are much smaller than the variation of the malicious gradients. 
\fi
for benign clients including outliers, their updated gradients always point to the direction of decreasing loss function values.
% so they basically follow the direction of the latest uploaded gradient.
Although some machine learning optimization algorithms, such as SGD, have deviations, they are still unbiased estimations of the normal gradient descent direction as a whole \cite{bottou2010large}.
Conversely, malicious clients' gradients are
%in the black-box setting where the attackers do not know other benign gradients, 
%they use the benign updates to conjecture the aggregation direction and tend to poison directly in the opposite direction. 
%These benign updates mainly relies on the received aggregated results from the defender.
%For stronger white-box attacks, 
usually opposite to the aggregation direction and 
come from the solution of the attack optimization program on other benign gradients.
%and the poisoning direction is usually opposite to the aggregation direction.
Thus, the variance of malicious gradients is enlarged by the optimization project.
Many works \cite{zhao2018federated,zhu2021federated,blanchard2017machine,bhagoji2019analyzing} also indicate that the fluctuation of the upload gradients cause the aggregation gradient to change more. 
Even when the defender uses the poisoned aggregation gradient to construct test gradient, this basis still remains unchanged and malicious and benign clients behave quite differently in this test.
Besides, {\sf RECESS} without extra model training is more efficient than other model-based detectors.
%和其他测试型的主动抵御稍微区别一下，一句话就行，从效率来说，比testing model的主动抵御方式效率更高
Concluding, 
%our method using the constructed gradient to test 
{\sf RECESS} is effective to detect model poisoning attacks.

\iffalse
\paragraph{{\bf Strategy 1: {\it last agg}}}

%Because the variation of gradients incurred by stochastic gradient descent algorithm can be enlarged by the optimization project solving,

\paragraph{{\bf Strategy 2: {\it scale down}}}
\fi

%For malicious clients, 

%\paragraph{Strategy 3: {\it select one}}
%Only select one gradient, update global model, and feedback to the corresponding client. Other clients not selected maintain silence.

\subsubsection{Poisoning Detection.}
After receiving clients' responses, the defender detects malicious clients from two dimensions of abnormality: \textit{direction} and \textit{magnitude} of gradient changes before and after the test.
%between the normal model updates and the test updates.

{\it Direction.}
Attacker can manipulate the controlled clients' gradients in an arbitrary direction to deviate the global gradient from the original direction.
In {\sf RECESS}, when receiving the test global gradient, the gradient direction of each benign client usually points to the distribution of its local datasets. 
The direction of each malicious gradient usually changes.  
Thus, it is necessary to measure the direction changes. 
Formally, we use metric {\it cosine similarity} $S_C^k$ to measure the angular change in direction between the  constructed gradient $\bm{g}_i^t$ and the response $\bm{g}_i^{k}$,
\begin{equation}
	S_{C}^k(\bm{g}_i^t,\bm{g}_i^{k})=\frac{\bm{g}_i^t\cdot\bm{g}_i^{k}}{\|\bm{g}_i^t\|_2 \cdot \|\bm{g}_i^{k}\|_2}.
\end{equation}
%FLTrust \cite{Cao2021FLTrustBF} also consider the direction change. In FLTrust,
FLTrust \cite{Cao2021FLTrustBF} has also considered the direction change, in which
%clients'...  
% between clients' gradients and the server's gradient.
clients' gradients far from the direction of the server's gradient will be discarded directly, 
but this sacrifices benign outliers and decreases the model's generalization.
Different from the root direction in \cite{Cao2021FLTrustBF}, we consider the range of direction changes before and after the test.
\cite{Cao2021FLTrustBF} must ensure the distribution of local datasets be close to the  of the root dataset, and abandons the local updates whose directions are opposite to the root direction, which would sacrifice the benign outliers and decrease the generalization of the final trained model.
Our design is more robust and efficient. 

{\it Magnitude.}
Besides direction, the magnitude of the malicious gradient also dominates the poisoning effect, especially when larger than the benign gradients.
Here we utilize the $l_2$ distance to measure the magnitude, i.e., $\|\bm{g}_i^k\|_2$. 

After that, we redefine the abnormality of the client gradient by the deviation extent in direction and magnitude, instead of 
the distance from other selected gradients (e.g. Krum, Mkrum, Median, server's gradient \cite{Cao2021FLTrustBF}) or some benchmarks (e.g. Trmean, Bylan).
%previous robust aggregation rules \cite{blanchard2017machine,guerraoui2018hidden} which use the distance of the mark (like median) as the abnormality.

%dynamically away from the aggregation direction many times and the change of gradient greatly

%\paragraph{Definition (Abnormality)}
\begin{definition}\label{def:abn}
\textbf{(Abnormality)}
In model poisoning attacks of FL, the abnormality $\alpha$ of $k$-th uploaded gradient from the $i$-th client should be measured by
\begin{equation}
%	\alpha = - \frac{S_{C}^k}{\|\bm{g}_i^k\|_2}.	
	\alpha= 
	\begin{cases}
	-S_C^k \cdot\left\|g_i^k\right\|_2, & S_C^k<0 \\ 
	-\frac{S_C^k}{\left\|g_i^k\right\|_2}, & S_C^k \geq 0
	\end{cases}.	
\end{equation}
\end{definition}
%decided by the inconsistency of their previous update direction and magnitude, rather than 

%In our mechanism, the abnormality $\alpha = S_{C}^k / \|\bm{g}_i^k\|$, 
%which is decided by the deviation  of direction and magnitude.

The cosine similarity $S_C^k$ controls the positive and negative of the abnormality $\alpha$.
When the direction of the client's response gradient $\bm{g}_i^{k}$ is inconsistent with the defender's test gradient $\bm{g}_i^t$, $\alpha$ will be positive, and vice versa.
The amount of change in $\alpha$ is related to 
$S_C^k$ and $\|\bm{g}_i^k\|_2$. 
The smaller the deviation between the direction of $\bm{g}_i^t$ and $\bm{g}_i^{k}$, the smaller the $\alpha$.
Meanwhile, the larger the $\|\bm{g}_i^k\|_2$, the higher the $\alpha$.
%The larger the magnitude of the response, the smaller the amount. 
This setting encourages small-gradient updates, which avoids the domination of aggregation by malicious updates usually with a larger magnitude.

This new \textit{Abnormality} can be used in both existing defense directions, robust aggregation and anomaly detection.
In this work, we focus on the more robust aggregation mechanism and leave the latter in Section \ref{sec:appdis}.

\subsection{New Robust Aggregation Mechanism}\label{sec:scheme}
After detection, we propose a new trust scoring based mechanism to aggregate gradients, which increases {\sf RECESS}'s fault tolerance for false detection.
%, i.e., robustness for false detections.

%We elaborate the application of the {\sf RECESS} algorithm into the robust aggregation rule 
%countermeasure  
% and propose corresponding,
%\subsection{Robust Aggregation Rule}
%Our robust aggregation rule 
%and give the practical trust scoring based mechanism 
%for every client in aggregation to 
%resist poisoning attacks and 
%\paragraph{New Trust Score Based Mechanism}

%\subsection{Anomaly Detection}

\subsubsection{Trust Scoring.}
{\sf RECESS} estimates the trust score for the long-term performance of each client, rather than the single iteration considered by previous schemes. 
%We take the previous performance of each client into account.

The detail of the scoring process is: 
%In detail,
the defender first sets the same initial trust score $TS_0$ for each client.
After each round of detection, a constant baseline score is deducted for clients who are detected as suspicious poisoners and not selected for aggregation.
When the trust score of one client $i$ reaches zero, i.e., $TS_i = 0$, the defender will label this client as malicious and no longer involve this client's updated gradients in the aggregation.
The $TS$ is calculated by
\begin{equation}
	TS = TS_0 - \alpha * baseline\_decreased\_score.
\end{equation}
The $TS_0$ and $baseline\_decreased\_score$ controls the detection speed.

To prevent the attacker from increasing the trust score by suspending poisoning or constructing benign gradients, 
we let the defender defer adding $TS$ to further restrict the attacker.
Only after a period of good performance (e.g.,  10 consecutive rounds), $TS$ increase, but the score would be deducted once the client is detected as malicious, which effectively
%increase the fault tolerance and 
%more  
%reduce the impact of poisoning.
forces the attacker not to poison or reduce the intensity of poisoning to a negligible level.

\subsubsection{Aggregation.}
After assigning $TS$ to all clients,
we transform $TS$ into weights which are used to average clients' gradients as the aggregation gradient $g$, i.e.,
\begin{equation}
	g=\sum_{i=1}^n w_i * g_i,
	\label{gradi}
\end{equation}
the weight of client $i$ is calculated by
\begin{equation}
w_i = \frac{e^{{TS}_i}}{\sum_{j=1}^{n} e^{{TS}_j}},
\label{weight}
\end{equation}
where $i,j = 1, 2,..., n$.

Note that in practice the server may select a subset of clients to aggregate, our mechanism also supports this case where the aggregated gradient should be the weighted average from the selected clients.

\subsection{Algorithm Pseudocode of {\sf RECESS}}\label{appendix_alg}
%\section{Algorithm Pseudocode of {\sf RECESS}}
We show the algorithm pseudocode of {\sf RECESS} in Algorithm \ref{alg:recess} including our proactive detection and trust score based robust aggregation.

\begin{algorithm}[!ht]
\DontPrintSemicolon  
\KwInput{
$G$: the set of all uploaded gradients;

$TS_0$: initial trust score;

$BS$: baseline decreased score.
}
\KwOutput{
%$\theta$: the global model.
$g$: the aggregation gradient.
}
   
%\SetKwFunction{FMain}{{\sf RECESS}}
%\SetKwProg{Fn}{Function}{:}{}
%\Fn{\FMain{$G$}}
%{
%
%}

%\tcc{Clients side}
%$g$: Receive aggregation gradient
%locally update and upload ($G$)

%\tcc{Server side}

$n$ = len($G$)\tcp*{number of clients}

$TS = [TS_0] * n$\tcp*{initial trust score for each client}

\If{detection mode is True}{

\tcc{construct test gradient}

$G_t$ = $G$\tcp*{List of constructed test query}

\For{$i$ in n }{

$g_i$= scale down ($G$[i])
 
$g_i'$= adjust direction ($g_i$ )

$G_t$[i] = $g_i'$
}

%$G_t$ = construct test gradient ($G$)

$G'$ = send to clients and receive updates ($G_t$)

\tcc{update the trust score of each client}

marker = [0] $*$ n \tcp*{marker of delayed score adding}

\For{i in n}{
$S_C$  = cosine similarity (${G_t}$[i], $G'$[i])\tcp*{direction}

$M = \| G' \text{[i]} \|_2$\tcp*{magnitude}

$\alpha = -S_{C}^k / M $\tcp*{abnormality value}

%\If{marker[i] is eligible}
%{
%$TS[i] += \alpha * BS$
%}

\tcc{update score when the marker is eligible}

{\bf if} {$\alpha \leq 0$} {\bf then} marker[i]+=1

%\If{$\alpha >= 0$}{marker[i]+=1}

\Else
{
%	\If{marker[i] $>$ 0}
%	{
%		marker[i] = -1
%	}
	{\bf if} {marker[i] $>$ 0} {\bf then} marker[i] = -1	
	{\bf else} marker[i] -= 1
%	\Else
%	{
%		marker[i] -= 1
%	}
}

\If{marker[i] $<$ 0 or marker[i] $>=$ 10}
{
$TS[i] -= \alpha * BS$
}

}
}

\tcc{{\bf aggregation}}

exp\_TS = exp(TS)

$w = [0]*n$\tcp*{initial weight for each gradient}

\For{i in n}{
%$w_i = {e^{{TS}_i}}/ sum(e^{{TS}_i})$
$w_i$ = exp\_TS[i] / sum(exp\_TS)
}
%$w_i = {e^{{TS}_i}}/{\sum_{j=1}^{n} e^{{TS}_j}}$

%{\bf for} i in n {\bf do} $w_i$ = exp\_TS[i] / sum(exp\_TS)

$g=\sum_{i=1}^n w_i * \text{G[i]}$

%$\theta \gets \theta + \lambda g$

\Return $g$

\caption{{\sf RECESS} Defense: Proactive Detection and Aggregation}
\label{alg:recess}
\end{algorithm}

%\subsection{Advantages.}
%Algorithm \ref{alg:recess} 
%%in Appendix \ref{appendix_alg} 
%shows the algorithm of {\sf RECESS}.
Compared with previous methods, {\sf RECESS} has four advantages:
\begin{enumerate}[(a)]
\item Provide a more accurate measure of the abnormality extent for each client's update to assign the trust score. 
\item Enable the defender to put clients’ previous performance into account, not only in each iteration, enhancing the detection fault tolerance.
\item Protect benign outliers while effectively detecting carefully-constructed poisoning gradient updates, improving model accuracy and generalization.
% benefiting from {\sf RECESS}.
\item Outperform previous FLTrust and ML-based detections in terms of efficiency with no need for extra data.
%, compared with previous FLTrust or ML-based detections.
%\end{enumerate}
%would mistakenly detect outliers as poisoning gradients.
%\begin{enumerate}
%	\item We feature a more accurate way to exploit the abnormality extent of each local model update to assign the trust score. 
%	\item Our new trust estimation enables the server to put clients’ previous performance into the counts, not only in each iteration, which enhances the fault tolerance.
%	\item Obtaining benefits from {\sf RECESS}, our detection mechanism can protect outliers while effectively counteracting carefully-constructed poisoning updates.
%	\item Previous ML-based detections need to maintain datasets and would mistakenly detect outliers as poisoning gradients, but 
%our method does not need to collect data and has superiority in both accuracy and efficiency.
\end{enumerate}

\subsection{Anomaly Detection Using the New Defined $Abnormality$}\label{sec:appdis}

The new defined \textit{Abnormality} can be also used in the defense direction of anomaly detection.
%In this work, we focus on the more robust aggregation mechanism and leave the latter in Appendix \ref{sec:appdis}.
Using proposed proactive detection, the defender can calculate the abnormality value of each client to obtain an abnormality set, which can be easily divided into two distinct groups, with malicious clients in the smaller group. 
The details of the division algorithm are shown as follows.

When the defender receiving an abnormality set $A$, which includes the abnormality $\alpha$ of each client in the entire FL system, i.e., $A = \{\alpha_1, \alpha_2, ..., \alpha_n\}$, we can set a threshold $t$ to divide the abnormality set $A$ into two groups, which also means $A = A_b \cup A_m$ and $\emptyset = A_b \cap A_m$, where $A_b$ denotes the abnormality subset of all benign clients including outliers and $A_m$ denotes all malicious clients' ones.
This is based on our observation that there is an obvious numerical difference between the abnormality of malicious gradients and benign gradients by using our proactive detection.
Consequently, malicious clients are in the group with small size, i.e., $|A_m| < |A_b| $.

The dividing threshold $t$ can be found by the traversal of $A$. 
After sorting $A$ according to the abnormality value, we can find a value $v$ that minimizes the range overlap of the mean value between the two groups.
This value can be used as the dividing threshold, i.e., $t=v$.
The range can be plus or minus three times the standard deviation, i.e., $\{\bar{A_i}-3\sigma,\bar{A_i}+3\sigma\}$ where $A_i$ is any one of two groups and $\sigma$ is the corresponding standard deviation.
Besides, the binary search algorithm can also be used here to reduce computational complexity.

\subsection{Effectiveness Analysis}
%\paragraph{Effectiveness.}

The key point of {\sf RECESS} is to slightly perturb constructed aggregation gradients to all clients, and gradients from malicious clients solved based on other benign gradients generate larger variation, which is easily discriminated.
Intuitively, the accumulated increases of each benign gradient's variance result in a corresponding increase in the malicious gradient's variance. Then we show the malicious gradient varies more than the benign gradients.
We first consider the worst white-box case where the attacker knows other benign gradients.
%The aggregation rules is also known for the attacker.
\begin{proposition}\label{proposi}
%\paragraph{Proposition 1}
For the model poisoning in an FL system, if the poisoning gradient (e.g., $g_1'$) is solved based on other benign gradients (e.g., $g_i$),
suppose the perturbation $\Delta$ added to each benign gradient, the variance of the poisoning gradient must be larger than the variance of any benign gradient, i.e.,
\begin{equation}
	Var(g_1' + \Delta_1) > \max_{c+1 \leq i \leq n} Var(g_i + \Delta_i)),
\end{equation}
where $i = c+1,...,n$.
%The proof is shown in Appendix \ref{appendix_proposi}.
\end{proposition}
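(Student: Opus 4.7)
The plan is to exploit the explicit functional dependence that every optimization-based poisoning attack enforces between the malicious gradient $g_1'$ and the benign gradients it observes. In the Fang attack, LIE, and the AGR Min-Max / Min-Sum family, the adversary solves a program in which $g_1'$ can be written as $f(g_{c+1}+\Delta_{c+1}, \dots, g_n+\Delta_n)$ for some map $f$ derived from the attack objective (a shifted and scaled sample mean for Fang and LIE; a worst-case displacement constrained by the benign pairwise spread for the AGR family). By contrast, each benign gradient $g_i+\Delta_i$ depends only on its own perturbation $\Delta_i$, since the defender injects the $\Delta_j$'s into each client's test channel independently and benign clients do not coordinate.

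First I would fix the probability space so that the variance is taken over the independent, zero-mean, defender-side perturbations $\{\Delta_i\}_{i=1}^{n}$, with local datasets and the nominal $g_i$ treated as deterministic. Interpreting $\text{Var}(\cdot)$ on vector-valued outputs as the trace of the covariance matrix, the right-hand side of the proposition collapses to $\max_{c+1 \leq i \leq n} \text{Var}(\Delta_i)$, because the benign response is a deterministic image of its own perturbed test gradient plus an independent upload perturbation.

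Second, for the left-hand side I would linearize $f$ around the unperturbed benign gradients and propagate variance through the Jacobian, giving
\begin{equation}
\text{Var}(g_1' + \Delta_1) = \sum_{i=c+1}^{n} \text{Var}(J_i \Delta_i) + \text{Var}(\Delta_1) + o(\|\Delta\|^2),
\end{equation}
where $J_i$ is the partial Jacobian of $f$ with respect to $g_i$. For every attack considered, the attacker's own objective forces $\|J_i\|$ to be bounded below by a positive constant along the direction of maximum poisoning ($1/(n-c)$ for mean-based constructions; $\Theta(1)$ along the principal deviation direction for LIE and the AGR-agnostic variants). Summing even two such positive contributions already strictly dominates any single $\text{Var}(\Delta_i)$, yielding the stated inequality.

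The main obstacle will be the AGR Min-Max and Min-Sum objectives, whose scaling factor $\gamma$ is a non-smooth function of the benign set through a $\max$ (or sum-of-squares) constraint, so $f$ is only piecewise differentiable. The plan is to split into two regimes: in the generic case where the active constraint is locally fixed under the defender's small $\Delta_i$, the first-order expansion above is exact; in the degenerate case at constraint-switching points, $\gamma$ is still monotone in the pairwise spread of the benign cluster, so one-sided directional variances are bounded below by sums of independent contributions indexed by $i$. Both cases yield the strict inequality, completing the argument.
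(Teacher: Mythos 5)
Your route is genuinely different from the paper's. You treat $g_1'$ as a generic map $f$ of the observed benign gradients and propagate the defender's perturbation variance through its Jacobian (a delta-method argument), whereas the paper substitutes the closed-form solution of the Fang optimization program, $g_1' = g - s\,(D_1 + D_2)$ with $D_1 = \frac{1}{(n-2c-1)\sqrt d}\min_{c+1\le i\le n}\sum_{l}\|g_l - g_i\|_2$ and $D_2 = \frac{1}{\sqrt d}\max_{c+1\le i\le n}\|g_i - g\|_2$, observes that the $D_2$ term alone already contributes $\max_{c+1\le i\le n} Var(\|g_i+\Delta_i\|_2)$ --- i.e., the entire right-hand side --- and then obtains strictness from the positive variance of the $D_1$ term. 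The paper's argument is narrower (it is written only for the Fang-style optimization attack, with the black-box and AGR cases handled by analogy), while yours aims at all the attack families at once; that generality is the main thing your approach buys.

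There is, however, a quantitative gap in your key step. You claim that ``summing even two such positive contributions already strictly dominates any single $Var(\Delta_i)$,'' with $\|J_i\|$ bounded below by $1/(n-c)$ for mean-based constructions. But then each propagated term is $Var(J_i\Delta_i)\approx Var(\Delta_i)/(n-c)^2$, and even the full sum over all $n-c$ benign clients is only of order $\sigma^2/(n-c)$ when every $Var(\Delta_i)=\sigma^2$ --- strictly \emph{smaller} than $\max_i Var(\Delta_i)=\sigma^2$, not larger. As written, your inequality survives only because of the additive $Var(\Delta_1)$ term, i.e., it silently reduces to ``$Var(\Delta_1)\ge\max_{i>c}Var(\Delta_i)$ plus something positive,'' which needs an unstated assumption that the malicious client's own perturbation has (at least) the maximal variance, and which makes the Jacobian lower bounds do no work. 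To close the argument along your lines you either need $\|J_i\|=\Theta(1)$ for at least one benign index $i$ --- which is exactly what the paper's $D_2=\max_i\|g_i-g\|_2$ term delivers, since a max is, to first order, a selector of a single benign gradient with unit coefficient --- or you need to make the reliance on $Var(\Delta_1)$ explicit and justify it. The piecewise-smoothness discussion for the AGR objectives is reasonable in outline but inherits the same issue.
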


%\section{Proof of Proposition 1}
%\label{appendix_proposi}
\begin{proof}
The goal of the model poison attack is to let the attacked aggregation gradient $g'$ deviate as far away from the direction of the aggregation gradient $g$ without poisoning as possible.
When uploaded gradients are linearly independent, we formulate the model poison attack of each iteration into an optimization program based on \cite{fang2020local},
\begin{maxi}
{g_1',...,g_c'}{s^{\mathrm{T}} (g'-g)}{\label{optimization}}{}	
\addConstraint{g=}{Agg (g_1,..., g_c, g_{c+1},..., g_n)}
\addConstraint{g'=}{Agg (g_1',..., g_c', g_{c+1},..., g_n)}
\end{maxi}
where 
%$g$ is the aggregation gradient without poisoning, $g'$ is the aggregation gradient after attack, 
$g_i$ denotes the gradient update of $i$-th client to server, 
$g_i'$ denotes the poisoning gradient from client $i$ manipulated by the attacker, 
and $s^{\mathrm{T}}$ is a column vector of the changing directions of the aggregation gradient, the first $c$ clients are malicious, and the rest clients are benign.

Suppose the first client is malicious, and 
we have $g_{1}^{\prime}=g-\lambda s$, thus the optimization objective can be transformed into $s^{\mathrm{T}}(g'-g)+\lambda s^{\mathrm{T}} s$. 
Based on the literature \cite{fang2020local}, we can know that $s^{\mathrm{T}}(g-g')$ is a constant and $s^{\mathrm{T}} s = d$, where $d$ is the dimension number of the aggregation gradient, thus the optimization program (\ref{optimization}) can be converted into
\begin{maxi}
{}{\lambda}{}{}	
\addConstraint{g_1'}{=Agg(g_1',..., g_c', g_{c+1},..., g_n)}
\addConstraint{g_1'}{=g - \lambda s}.
\end{maxi}

This program is equal to the program (\ref{optimization}) and its solution is
\begin{equation}	
\begin{split}
%\begin{aligned}
\lambda = &\frac{1}{(n-2 c-1) \sqrt{d}} \cdot \min _{c+1 \leq i \leq n}\left(\sum_{l \in \tilde{\Gamma}_{g_{i}}^{n-c-2}} \|g_{l}, g_{i}\|_2 \right)\\
& +\frac{1}{\sqrt{d}} \cdot \max _{c+1 \leq i \leq n} \|g_{i}, g\|_2,
%\end{aligned}
\label{solution}
%\nonumber
\end{split}
\end{equation}
where $\tilde{\Gamma}_{g_{i}}^{n-c-2}$ denotes the set of $n-c-2$ benign gradients that are the closest to $g_i$ in $L_2$ distance.
Afterward, we abbreviate the solution (\ref{solution}) into 
\begin{equation}
	\lambda = D_1(g_l,g_i) + D_2(g_i,g), 
\end{equation}
where $D_1(g_l,g_i) = \frac{1}{(n-2 c-1) \sqrt{d}} \cdot \min _{c+1 \leq i \leq n}\left(\sum_{l \in \tilde{\Gamma}_{g_{i}}^{n-c-2}} \|g_{l}, g_{i}\|_2 \right)$ and $D_2(g_i,g) = \frac{1}{\sqrt{d}} \cdot \max _{c+1 \leq i \leq n} \|g_{i}, g\|_2$, so the gradient of the first malicious client should be
\begin{equation}
	g_1' = g - s D_1(g_l,g_i) - s D_2(g_i,g),
\end{equation}
and its variance can be calculated as:
\begin{equation}
	Var(g_1') = s^2 Var( D_1(g_l,g_i) +  D_2(g_i,g)).
\end{equation}

After adding the perturbation $\Delta_i$ on the gradients $g_i$, we can obtain 
\begin{equation}
\begin{split}
Var(g_1'+\Delta_1) = & s^2 Var(D_1(g_l,g_i + \Delta_i)) \\
& + s^2 Var(D_2(g_i + \Delta_i,g)).
%\\
%&> s^2 Var(D_2(g_i + \Delta_i,g))\\
%&= s^2 Var(D_2(g_i + \Delta_i) \\
%&= s^2 Var(\frac{1}{\sqrt{d}} \cdot \max _{c+1 \leq i \leq n} \|g_{i}+\Delta_i \|_2)\\
%%&= Var(\max _{c+1 \leq i \leq n} \|g_{i}+\Delta_i, g\|_2)\\
%&=\max _{c+1 \leq i \leq n} Var(\|g_{i}+\Delta_i\|_2)
%\nonumber
\label{eq:var}
\end{split}	
\end{equation}

For the second term of equation (\ref{eq:var}), there is
\begin{equation}
\begin{aligned}
	s^2 Var(D_2(g_i + \Delta_i,g)) 
	&= s^2 Var(D_2(g_i + \Delta_i) \\
&= s^2 Var(\frac{1}{\sqrt{d}} \cdot \max _{c+1 \leq i \leq n} \|g_{i}+\Delta_i \|_2)\\
%&= Var(\max _{c+1 \leq i \leq n} \|g_{i}+\Delta_i, g\|_2)\\
&=\max _{c+1 \leq i \leq n} Var(\|g_{i}+\Delta_i\|_2).
\end{aligned}
\end{equation}

As the first term of equation (\ref{eq:var}) is larger than zero, the following inequality holds
\begin{equation}
	Var(g_1'+\Delta_1) 
> \max _{c+1 \leq i \leq n} Var(\|g_{i}+\Delta_i\|_2).
\end{equation}
\end{proof}

%When uploaded gradients are not linearly independent, there is a covariance term in the expansion of left inequality.

For the black-box setting, the attacker leverages compromised clients' normal gradient updates without tampering to conjecture the aggregation direction to poison based on \cite{fang2020local,Shejwalkar2021ManipulatingTB}.
Those normal gradients have a similar distribution to the other benign gradients, thereby the property of black-box also satisfies the Proposition \ref{proposi}. 
%Here we have not repeat due to the length constraint.

\iffalse
\paragraph{Cost.}

Therefore, for the detection of $r$ iterations, the communication cost is $r$ iterations and the computation cost should be $n \cdot r$ since $n$ clients in each iteration can receive their customized test gradients separately. 

\fi

\section{Evaluation}
\subsection{Setup}
\subsubsection{Datasets and FL Setting.}
The MNIST \cite{lecun1998gradient} is a simple handwritten digits dataset with low dimensions images and 10 balanced classes.
The CIFAR-10 \cite{krizhevsky2009learning} is a moderate dataset with 10 balanced classes of animals and vehicles. 
The Purchase \cite{c:2} is an unbalanced classification task.
The FEMNIST \cite{kairouz2021advances} is an explained character dataset including 52 upper and lower case letters and 10 digits from 3,400 clients.
The dataset is Non-IID and unbalanced, which is usually considered to be cross-device FL scenarios. 

Table \ref{tab:setting} shows four datasets and parameter settings used in the evaluation.
%In each FL epoch, we set the server to randomly select 60 clients' gradient updates. 
%commonly encountered in cross-device FL settings, while the previous datasets are more common in cross-silo FL settings.
The IID and Non-IID are both considered in the dataset division.
%Note that the IID case is more threatening to defenders \cite{fang2020local} so our evaluation takes the IID case more into account.
We also consider two typical types of FL, i.e., cross-silo and cross-device.
%Datasets included are CIFAR10 (covers IID and cross-silo FL cases) and FEMNIST (covers Non-IID and cross-device FL cases).
In one FL iteration, each client updates its local model by one epoch. 
%Other details are shown in 
%%Table \ref{tab:setting}.
%%Related description 
%Appendix \ref{appd:exp}.
% shows the brief description of datasets and FL setting.

\begin{table*}[h]
\centering
\caption{Experiment datasets and FL settings.}
\label{tab:setting}
\resizebox{\linewidth}{!}{%
\begin{tabular}{cccccccccc}
\hline
Dataset  & Class & Size    & Dimension      & Model                             & Clients & Batch Size & Optimizer & Learning Rates & Epochs \\ \hline
MNIST    & 10    & 60,000  & $28 \times 28$ & FC ($784 \times 512 \times 10$)   & 100     & 100        & Adam      & 0.001          & 500    \\
CIFAR-10 & 10 & 60,000  & $32 \times 32$ & Alexnet \cite{krizhevsky2012imagenet} & 50 & 250              & SGD  & 0.5 to 0.05 at 1000th epoch & 1200 \\
Purchase & 100   & 197,324 & 600            & FC ($600 \times 1024 \times 100$) & 80      & 500        & SGD       & 0.5            & 1000   \\ 
FEMNIST  & 62 & 671,585 & $28 \times 28$ & LeNet-5 \cite{lecun1998gradient}      & $60 \subset 3400$ & client's dataset & Adam & 0.001               & 1500 \\
\hline
\end{tabular}%
}
\end{table*}

\begin{table*}[t]
\centering
\caption{Comparison results of FL accuracy with various defenses
%robust aggregation rules 
against the latest poisoning attacks in both white-box and black-box cases.
In the white-box (black-box) case, the attacker has (no) knowledge of other benign gradients.
Each result is averaged over multiple repetitions.
}
\label{tab:results}
\resizebox{\linewidth}{!}{%
\begin{tabular}{ccccccccccc} 
\hline
\multirow{2}{*}{Dataset} & \multirow{2}{*}{\begin{tabular}[c]{@{}c@{}}Attacker's\\ Knowledge\end{tabular}} & \multirow{2}{*}{Attacks} & \multicolumn{8}{c}{Defences} \\ 
\cline{4-11}
 &  &  & Krum & Mkrum & Bulyan & Trmean & Median & FLTrust & DnC & RECESS \\ 
\hline
\multirow{11}{*}{\begin{tabular}[c]{@{}c@{}}MNIST\\0.962\\(FedAvg)\end{tabular}} &  & {\cellcolor[rgb]{0.875,0.875,0.875}}No Attack & {\cellcolor[rgb]{0.875,0.875,0.875}}0.889 & {\cellcolor[rgb]{0.875,0.875,0.875}}0.961 & {\cellcolor[rgb]{0.875,0.875,0.875}}0.954 & {\cellcolor[rgb]{0.875,0.875,0.875}}\textbf{0.962} & {\cellcolor[rgb]{0.875,0.875,0.875}}0.932 & {\cellcolor[rgb]{0.875,0.875,0.875}}0.958 & {\cellcolor[rgb]{0.875,0.875,0.875}}0.960 & {\cellcolor[rgb]{0.875,0.875,0.875}}0.959 \\ 
\cline{3-11}
 & \multirow{5}{*}{White-box} & LIE & 0.762 & 0.90 & 0.862 & 0.898 & 0.913 & \textbf{0.957} & 0.951 & 0.952 \\
 &  & Optimization attack & 0.681 & 0.846 & 0.875 & 0.944 & 0.915 & 0.902 & 0.946 & \textbf{0.951} \\
 &  & AGR-tailored & 0.547 & 0.775 & 0.869 & 0.852 & 0.888 & 0.893 & 0.937 & \textbf{0.948} \\
 &  & AGR-agnostic Min-Max & 0.875 & 0.798 & 0.896 & 0.846 & 0.896 & 0.923 & 0.950 & \textbf{0.957} \\
 &  & AGR-agnostic Min-Sum & 0.601 & 0.829 & 0.872 & 0.869 & 0.910 & 0.929 & 0.941 & \textbf{0.955} \\ 
\cline{2-11}
 & \multirow{5}{*}{Black-box} & LIE & 0.794 & 0.929 & 0.887 & 0.911 & 0.914 & 0.952 & \begin{tabular}[c]{@{}c@{}}\textbf{\textbf{0.960}}\\\end{tabular} & 0.959 \\
 &  & Optimization attack & 0.712 & 0.855 & 0.883 & 0.945 & 0.917 & 0.926 & 0.956 & \textbf{0.957} \\
 &  & AGR-tailored & 0.645 & 0.805 & 0.872 & 0.856 & 0.891 & 0.942 & 0.944 & \textbf{0.955} \\
 &  & AGR-agnostic Min-Max & 0.886 & 0.807 & 0.901 & 0.867 & 0.901 & 0.952 & 0.951 & 0.948 \\
 &  & AGR-agnostic Min-Sum & 0.633 & 0.835 & 0.877 & 0.877 & 0.912 & 0.951 & \textbf{0.957} & 0.954 \\ 
\hline

\multirow{11}{*}{\begin{tabular}[c]{@{}c@{}}CIFAR-10\\0.6605\\(FedAvg)\end{tabular}} &  & {\cellcolor[rgb]{0.875,0.875,0.875}}No Attack & {\cellcolor[rgb]{0.875,0.875,0.875}}0.5162 & {\cellcolor[rgb]{0.875,0.875,0.875}}0.6494 & {\cellcolor[rgb]{0.875,0.875,0.875}}0.6601 & {\cellcolor[rgb]{0.875,0.875,0.875}}\textbf{0.6605} & {\cellcolor[rgb]{0.875,0.875,0.875}}0.6582 & {\cellcolor[rgb]{0.875,0.875,0.875}}0.6341 & {\cellcolor[rgb]{0.875,0.875,0.875}}0.6409 & {\cellcolor[rgb]{0.875,0.875,0.875}}0.6554 \\ 
\cline{3-11}
 & \multirow{5}{*}{White-box} & LIE & 0.5058 & 0.5921 & 0.0965 & \textbf{0.6436} & 0.6243 & 0.6189 & 0.6047 & 0.6085 \\
 &  & Optimization attack & 0.3459 & 0.5817 & 0.6118 & 0.5066 & 0.5133 & 0.6075 & 0.5813 & \textbf{0.6206} \\
 &  & AGR-tailored & 0.2246 & 0.2915 & 0.2355 & 0.2276 & 0.2970 & 0.4636 & 0.4614 & \textbf{0.6043} \\
 &  & AGR-agnostic Min-Max & 0.5134 & 0.3096 & 0.3173 & 0.3709 & 0.2723 & 0.5229 & 0.5579 & \textbf{0.6202} \\
 &  & AGR-agnostic Min-Sum & 0.2248 & 0.4461 & 0.2211 & 0.4271 & 0.2765 & 0.5476 & 0.5347 & \textbf{0.6406} \\ 
\cline{2-11}
 & \multirow{5}{*}{\begin{tabular}[c]{@{}c@{}}Black-box\\\end{tabular}} & LIE & 0.3139 & 0.4512 & 0.3648 & 0.6063 & 0.5635 & \textbf{0.6319} & 0.6101 & 0.6318 \\
 &  & Optimization attack & 0.2891 & 0.5673 & 0.2605 & 0.5558 & ~0.5452 & 0.6284 & 0.6159 & \textbf{0.6390} \\
 &  & AGR-tailored & 0.2455 & 0.3123 & 0.2349 & 0.3155 & 0.2650 & 0.5534 & ~0.6066 & \textbf{0.6177} \\
 &  & AGR-agnostic Min-Max & 0.4078 & 0.3311 & 0.2948 & 0.2773 & 0.2934 & 0.5303 & 0.5771 & \textbf{0.6286} \\
 &  & AGR-agnostic Min-Sum & 0.2242 & 0.3435 & 0.2551 & 0.4381 & 0.3492 & 0.5094 & 0.5676 & \textbf{0.6432} \\ 
 \hline
 
\multirow{11}{*}{\begin{tabular}[c]{@{}c@{}}Purchase\\0.922\\(FedAvg)\end{tabular}} &  & {\cellcolor[rgb]{0.875,0.875,0.875}}No Attack & {\cellcolor[rgb]{0.875,0.875,0.875}}0.621 & {\cellcolor[rgb]{0.875,0.875,0.875}}0.919 & {\cellcolor[rgb]{0.875,0.875,0.875}}0.913 & {\cellcolor[rgb]{0.875,0.875,0.875}}\textbf{0.920} & {\cellcolor[rgb]{0.875,0.875,0.875}}0.874 & {\cellcolor[rgb]{0.875,0.875,0.875}}0.920 & {\cellcolor[rgb]{0.875,0.875,0.875}}0.919 & {\cellcolor[rgb]{0.875,0.875,0.875}}\textbf{0.920} \\ 
\cline{3-11}
 & \multirow{5}{*}{White-box} & LIE & 0.579 & 0.894 & 0.804 & 0.897 & 0.869 & \textbf{0.915} & 0.914 & 0.913 \\
 &  & Optimization attack & 0.561 & 0.782 & 0.676 & 0.902 & 0.872 & 0.905 & \textbf{0.912} & 0.911 \\
 &  & AGR-tailored & 0.008 & 0.705 & 0.626 & 0.686 & 0.676 & 0.845 & 0.884 & \textbf{0.911} \\
 &  & AGR-agnostic Min-Max & 0.015 & 0.715 & 0.619 & 0.575 & 0.728 & 0.891 & 0.897 & \textbf{0.914} \\
 &  & AGR-agnostic Min-Sum & 0.011 & 0.735 & 0.603 & 0.766 & 0.716 & 0.885 & 0.905 & \textbf{0.913} \\ 
\cline{2-11}
 & \multirow{5}{*}{Black-box} & LIE & 0.803 & 0.901 & 0.830 & 0.900 & 0.890 & 0.916 & 0.915 & \textbf{0.926} \\
 &  & Optimization attack & 0.577 & 0.797 & 0.704 & 0.904 & 0.884 & 0.911 & 0.918 & \textbf{0.921} \\
 &  & AGR-tailored & 0.013 & 0.737 & 0.629 & 0.698 & 0.733 & 0.891 & 0.895 & \textbf{0.914} \\
 &  & AGR-agnostic Min-Max & 0.019 & 0.720 & 0.630 & 0.650 & 0.740 & 0.908 & 0.906 & \textbf{0.920} \\
 &  & AGR-agnostic Min-Sum & 0.031 & 0.755 & 0.610 & 0.774 & 0.748 & 0.898 & 0.912 & \textbf{0.919} \\ 
\hline
\multirow{11}{*}{\begin{tabular}[c]{@{}c@{}}FEMNIST\\0.8235\\(FedAvg)\end{tabular}} &  & {\cellcolor[rgb]{0.875,0.875,0.875}}No Attack & {\cellcolor[rgb]{0.875,0.875,0.875}}0.6863 & {\cellcolor[rgb]{0.875,0.875,0.875}}0.8206 & {\cellcolor[rgb]{0.875,0.875,0.875}}0.7115 & {\cellcolor[rgb]{0.875,0.875,0.875}}0.8217 & {\cellcolor[rgb]{0.875,0.875,0.875}}0.7897 & {\cellcolor[rgb]{0.875,0.875,0.875}}\textbf{0.8237} & {\cellcolor[rgb]{0.875,0.875,0.875}}0.8201 & {\cellcolor[rgb]{0.875,0.875,0.875}}0.8230 \\ 
\cline{3-11}
 & \multirow{5}{*}{White-box} & LIE & 0.2687 & 0.5897 & 0.6514 & \textbf{0.8146} & 0.7828 & 0.8081 & 0.8048 & 0.7957 \\
 &  & Optimization attack & 0.2605 & 0.4981 & 0.6286 & 0.7920 & 0.7556 & 0.7848 & 0.7456 & \textbf{0.7983} \\
 &  & AGR-tailored & 0.5462 & 0.3696 & 0.4917 & 0.6127 & 0.4784 & 0.7615 & 0.3661 & \textbf{0.8145} \\
 &  & AGR-agnostic Min-Max & 0.6001 & 0.0606 & 0.6701 & 0.5717 & 0.5552 & 0.7845 & 0.6916 & \textbf{0.8019} \\
 &  & AGR-agnostic Min-Sum & 0.4750 & 0.2904 & 0.4496 & 0.5982 & 0.6264 & 0.7148 & 0.0694 & \textbf{0.7967} \\ 
\cline{2-11}
 & \multirow{5}{*}{Black-box} & LIE & 0.2221 & 0.8153 & 0.7045 & 0.8114 & 0.7151 & \textbf{0.8148} & 0.8145 & 0.8048 \\
 &  & Optimization attack & 0.5998 & 0.7411 & 0.6646 & 0.7893 & 0.6741 & 0.7807 & 0.7824 & \textbf{0.7937} \\
 &  & AGR-tailored & 0.5419 & 0.4009 & 0.4330 & 0.6858 & 0.4527 & 0.7651 & 0.7027 & \textbf{0.7745} \\
 &  & AGR-agnostic Min-Max & 0.3796 & 0.5631 & 0.6332 & 0.6440 & 0.6605 & 0.6848 & 0.7351 & \textbf{0.7850} \\
 &  & AGR-agnostic Min-Sum & 0.6135 & 0.7524 & 0.7052 & 0.6589 & 0.6785 & 0.7848 & 0.7122 & \textbf{0.7936} \\
\hline
\end{tabular}
}
\end{table*}

Eight defenses are considered in the comparison experiments.
Specifically, the common FedAvg is used as the benchmark.
Five classic robust rules,
%against model poisoning attacks, 
Krum, Mkrum, Bulyan, Trmean, and Median, can provide convergence guarantees theoretically.
Two SOTA defenses, FLtrust and DnC, are also involved.
%to defend against the model poisoning attack.
The parameters of these eight defenses are set to default.
%Their parameters are set to default.
For {\sf RECESS}, we set $\mathcal{A}=0.95$, $TS_0=1$, and $baseline\_decreased\_score=0.1$ unless otherwise specified. 
%except for the sensitivity analysis in Appendix \ref{sec:sensitivity}.

\subsubsection{Attack and Defense Settings.}

We consider the five latest poisoning attacks including LIE, optimization attack, AGR-tailored attack, AGR-agnostic Min-Max attack, and Min-Sum attack. 
%for the above-mentioned aggregation algorithms. 
%For any other aggregation algorithms, the code allows for the simple plug-and-attack framework.
We also set two scenarios, i.e., white-box and black-box, where attackers have and does not have knowledge of other benign gradients.
%comprehensively consider attackers' knowledge based on threat models, including 
We assume 20\% malicious clients as the default value unless specified otherwise. 
%In each evaluation of attack impact, FL training is instantiated without the exact same set of gradients
For the test gradient construction, the modification allows arbitrary adjustments to gradient elements within a defined threshold. In our experiments, we select and add noise to the first 10\% of dimensions, iteratively adjusting until cosine similarity before/after meets the threshold.

The previous defenses are evaluated using default setting.
For Krum, we select the gradient with the smallest $L_2$ distance around $n-f-2$ gradients (e.g., $50-10-2=38$ in CIFAR). 
To ensure a fair evaluation, the parameter settings in our evaluation are aligned with previous works \cite{Cao2021FLTrustBF,Shejwalkar2021ManipulatingTB}.
The performance of baselines in our test is also consistent with the original claims. 

The detection window length depends on the setting of the trust score and the attack strength. Benefiting from our design, the setting of the detection threshold is fully determined by {\sf RECESS} itself through the weight of the aggregation, as shown in Equation \eqref{gradi} and \eqref{weight}.

We also consider the worst case where the attacker controls all malicious clients, i.e., all malicious clients colluded together, which is the same threat model as baselines. 
If the number of colluding clients decreases, the poisoned gradient will have a higher intensity in the aggregation, i.e., more abnormal, making it easier to be detected by {\sf RECESS}.

%\paragraph{Detection Setting}

%\noindent {\bf $\bullet$ Metric.}
\subsubsection{Metric.}
$Accuracy$ is used to measure the performance of the well-trained model with or without poisoning.
The attack effect is measured by the drop in $accuracy$.
Higher $Accuracy$ indicates better defense. 
%More details are shown in 
%%Table \ref{tab:setting}.
%%Related description 
%Appendix \ref{appd:exp}.

%\subsubsection{Dataset.}

\subsubsection{Environment.}
All experiments are executed on an Ubuntu 16.04.1 LTS 64 bit server, where the CPU is Intel(R) Xeon(R) Silver 4214, memory is 128GB RAM, and GPU is one RTX 2080 Ti with 11GB dedicated memory. 
The programming language is Python 3.7.7 and the Pytorch framework is used to implement the machine learning models.

\subsection{Comparison with SOTA}
Table \ref{tab:results} shows the main comparison results. 
{\sf RECESS} obtains higher accuracy than two SOTA defenses under latest poisoning attacks.
Due to the space limit, here we mainly show the results on two datasets. 

\subsubsection{Defender's Goal (1): Defensive Effectiveness.} 
As we can see from each row in Table \ref{tab:results},
{\sf RECESS} can defend against model poisoning attacks, but previous defenses have limited effect,
% and improve the final accuracy.
%But the effect of previous defenses is limited.
%Previous defenses cannot mitigate , 
especially for the strongest AGR attack series in the white-box case. 
%where the attacker knows other benign gradients.
%There are two reasons: 
The reasons are twofold:
\begin{enumerate}[(a)]
	\item Some malicious gradients evade defenses to be selected for aggregation.
	\item Benign outliers are misclassified as malicious gradients and removed from the aggregation, especially for Non-IID datasets (e.g., FEMNIST) with more outliers.  
\end{enumerate}
%(1) some malicious gradients are selected for aggregation.
%(2) benign outliers are mistakenly detected as poisoning gradients and removed from the aggregation, especially for Non-IID datasets (FEMNIST in Table \ref{tab:results}) 
%%the direction inconsistency of benign gradients generates 
%with more outliers.
%they are also not reliable to detect malicious gradients due to the existence of outliers.
%both outliers and malicious gradients may exhibit a direction away from the benign gradients, 
Moreover, benign outliers are intractable for all previous defenses.
FLTrust is better than other defenses since its normalization limits the large magnitude gradients.
%However, {\sf RECESS} achieves the best accuracy, 
%especially in Non-IID FEMNIST datasets.
%can effectively defend against the poisoning attack, 
In contrast, {\sf RECESS} can effectively distinguish benign outliers from malicious gradients, 
%by the feedback of the test gradient, 
thus obtaining the highest accuracy, even in the most challenging white-box Non-IID case.

%and previous defenses are not able to cope with this problem. 

\begin{figure}[t]
	\centering
	\includegraphics[width=0.8\linewidth]{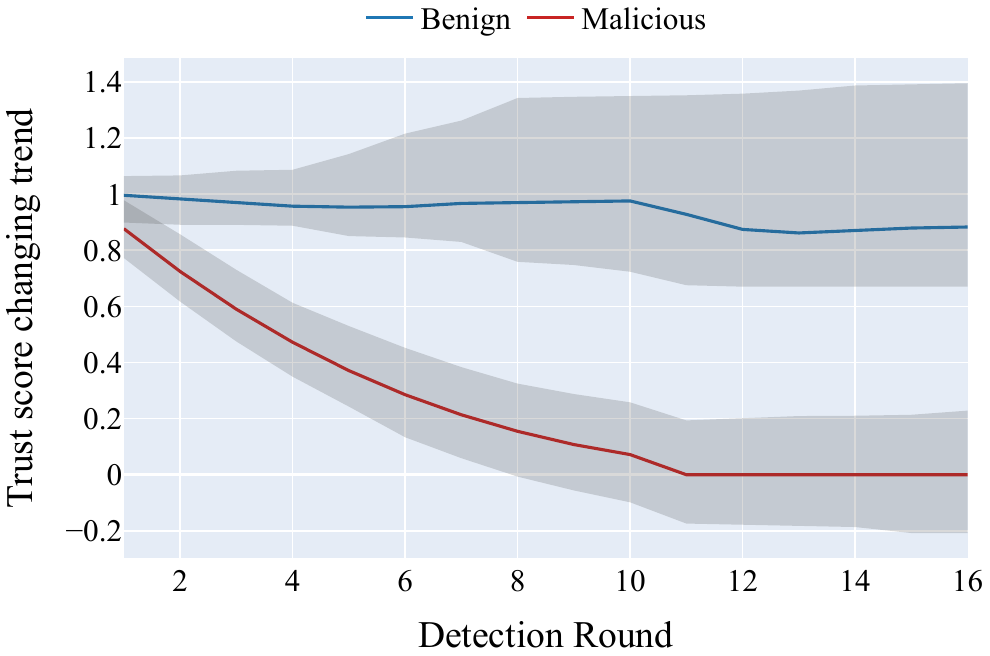}
	\caption{The changing trend of clients' trust score in {\sf RECESS}.
	The task is CIFAR-10.
	The model poisoning attack is the AGR-agnostic Min-Max attack.
	}
	\label{AGR_Score}
\end{figure}
\begin{figure}[t]
	\centering
	\includegraphics[width=\linewidth]{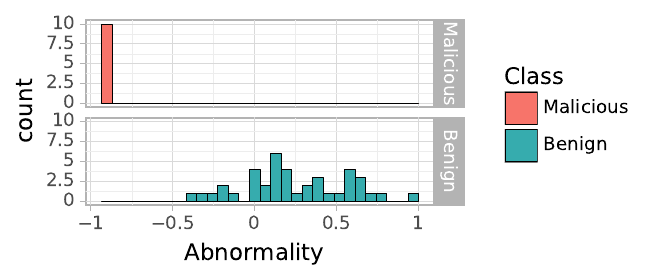}
	\caption{
	The histogram of clients' $Abnormality$ value  in {\sf RECESS}.
	The task is CIFAR-10.
	The model poisoning attack is the AGR-agnostic Min-Max attack.
	The detection round is \#3.
	}
	\label{AGR_dist}
\end{figure}

\subsubsection{Defender's Goal (2): Model Performance.} 
As shown in the first row in each set of Table \ref{tab:results} where
no attacks are under consideration,
%there is no attack, 
robust rules will reduce the accuracy but have little impact on {\sf RECESS}.
This can be explained from three aspects:
%Such a phenomenon is caused by three aspects:
%There are three reasons:
%(a) attackers' manipulation reduces the number of benign gradients participating in aggregation compared with no attack.
%(b) existing robust aggregation rules would discard benign gradients more or less, especially outliers.
%(c) it is unable to avoid malicious gradients being selected into aggregation.
\begin{enumerate}[(a)]
	\item Due to the existence of attackers, only 80\% benign clients participate in the aggregation compared with no attack (e.g. 40 vs 50 for CIFAR-10).
	\item Existing defenses discard benign gradients more or less, especially for benign outliers.
	\item Malicious gradients are selected for aggregation.
\end{enumerate}
The first aspect is inevitable since the defender cannot control the attacker, which also causes the main accuracy gap between FedAvg and our {\sf RECESS}.
However, {\sf RECESS} can improve the other two aspects 
%can be improved using {\sf RECESS}, which can 
by more accurately distinguishing whether gradients are malicious or benign.
%The  case with setting
%is most challenging for existing robust aggregation rules because the inconsistency of more benign gradients' direction generates more outliers, Our {\sf RECESS} can effectively distinguish malicious and benign points in outliers.

\subsubsection{Impact of Attacker's Knowledge.}
Different knowledge affects the attack, and {\sf RECESS} outperforms other defenses in both white/black-box.
The stronger the attack, the better the effect of {\sf RECESS}.
%In the black-box case, the attacker cannot offset the effect of benign gradients on the aggregation without the knowledge of other benign gradients, so the poisoning is greatly weakened and {\sf RECESS} is not prominent as in the white-box case.
In white-box cases, the attacker can elaborately construct poisoning gradients to maximize attack effectiveness with the knowledge of other benign gradients.
But in black-box cases, 
such adjustments are not possible without knowledge of other benign gradients, since 
the attacker cannot offset the effect of other benign gradients on the final aggregation, so the attack effect is greatly reduced.
This is the reason why defenses gain higher accuracy,
just look more effective, 
not themselves more effective.

%\begin{figure}[t]
%    \centering
%    \begin{subfigure}[b]{0.49\textwidth}
%    \centering
%    \includegraphics[width=1\linewidth]{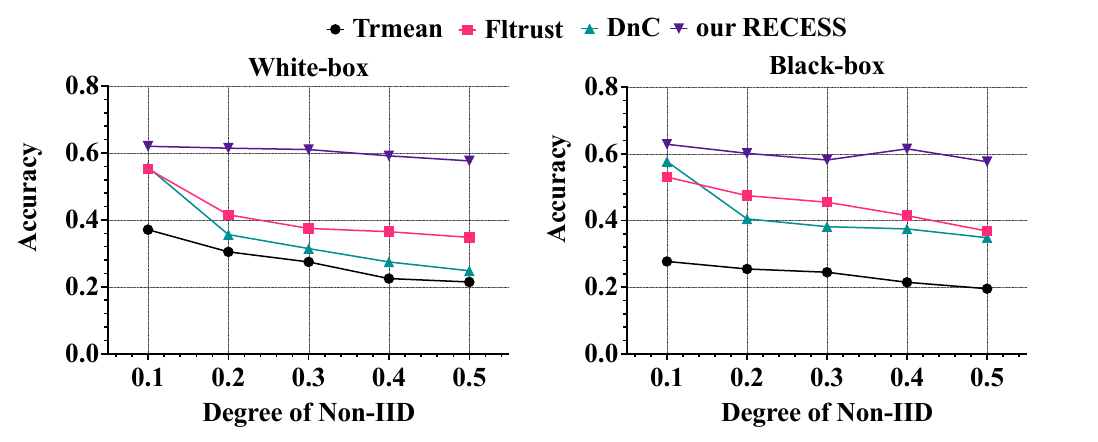}
%    \caption{The impact of degree of Non-IID of local dataset on the FL defenses.
%	The task is CIFAR-10.
%	The model poisoning attack is AGR Min-Max.
%	In the white-box (black-box) case, the attacker has (no) knowledge of other benign gradients.
%	Our {\sf RECESS} achieves the best and most stable defense effect.}
%    \label{Degree}
%\end{subfigure}
%\hfill
%\begin{subfigure}[b]{0.49\textwidth}
%    \centering
%    \includegraphics[width=1\linewidth]{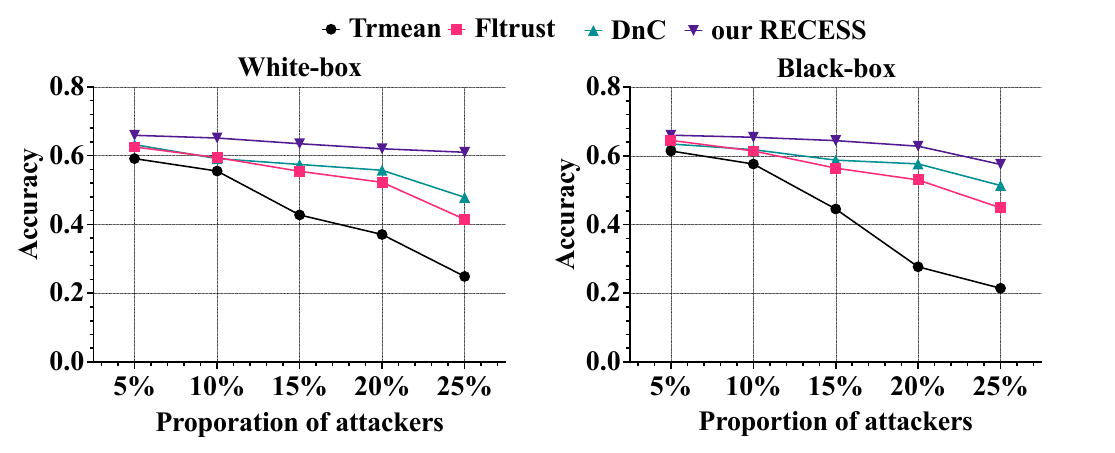}
%    \caption{The impact of the proportion of attackers on the FL defenses.
%	The task is CIFAR-10.
%	The model poisoning attack is AGR Min-Max.
%	In the white-box (black-box) case, the attacker has (no) knowledge of other benign gradients.
%	As the proportion increases, the accuracy of using our {\sf RECESS} decreases minimally.}
%    \label{Number}
%\end{subfigure}
%\end{figure}

\subsubsection{Changing Trend of Trust Score.}
Fig. \ref{AGR_Score} presents the trust score of benign and malicious clients in each round of {\sf RECESS} detection.
Here we execute the strongest AGR-agnostic Min-max attack.
Fig. \ref{opti_Score} illustrates the decreasing trend of trust scores for benign and malicious clients in {\sf RECESS} when dealing with the optimization attack.

Fig. \ref{AGR_dist} shows the histogram of each client's $Abnormality$ value in the third detection round to illustrate the effectiveness of {\sf RECESS}.
%Other results of the optimization attack are shown in Appendix \ref{appd:exp}.
Fig. \ref{opti_dist} also shows the representative histogram of each client's $Abnormality$ value in the seventh detection round.
{\sf RECESS} can effectively distinguish malicious and benign users in both the optimization attack and AGR attacks.

%{\sf RECESS} can effectively distinguish malicious and benign users in both the optimization attack and AGR attacks. 

%However, for the AGR-agnostic Min-Max attack, 
%the poisoning gradient is located in the distribution of benign gradient.   
%this is because its optimization objectives consider that the maximum distance poisoning gradient does not exceed ,

%maximum distance from any other gradient is upper bounded by the maximum distance between any two benign gradients.

\subsubsection{Comparison with FLDetector.}

The recent work FLDetector \cite{zhang2022fldetector} also adopts a "trust scoring" approach to identify malicious clients. The key difference lies in how anomalies are detected. We actively send test queries and compare the updates returned from clients, while FLDetector makes predictions based on historical updates. We also evaluated FLDetector, with the following results shown in Table \ref{tab:fldetector}.

% Please add the following required packages to your document preamble:
% \usepackage{booktabs}
% \usepackage{multirow}
% \usepackage{graphicx}
\begin{table}[h]
\caption{Comparison results between RECESS and FLDetector under different model poisoning attacks.}
\label{tab:fldetector}
\resizebox{\columnwidth}{!}{%
\begin{tabular}{@{}cccc@{}}
\toprule
\textbf{Dataset}                        & \textbf{Attacks}     & \textbf{FLDetector} & \textbf{RECESS} \\ \midrule
\multirow{2}{*}{CIFAR-10 (IID)}         & Optimization attack  & 0.6384              & \textbf{0.6390} \\
                                        & AGR-agnostic Min-Max & 0.5178              & \textbf{0.6286} \\ \cmidrule(l){2-4} 
\multirow{2}{*}{CIFAR-10 (Non-IID 0.5)} & Optimization attack  & 0.6041              & \textbf{0.6145} \\
                                        & AGR-agnostic Min-Max & 0.5514              & \textbf{0.5912} \\ \cmidrule(l){2-4} 
\multirow{2}{*}{CIFAR-10 (Non-IID 0.8)} & Optimization attack  & 0.3544              & \textbf{0.6018} \\
                                        & AGR-agnostic Min-Max & 0.0176              & \textbf{0.5841} \\ \cmidrule(l){2-4} 
\multirow{2}{*}{FEMNIST}                & Optimization attack  & 0.7217              & \textbf{0.7937} \\
                                        & AGR-agnostic Min-Max & 0.5616              & \textbf{0.7850} \\ \bottomrule
\end{tabular}%
}
\end{table}

\begin{figure}
	\centering
	\includegraphics[width=0.8\linewidth]{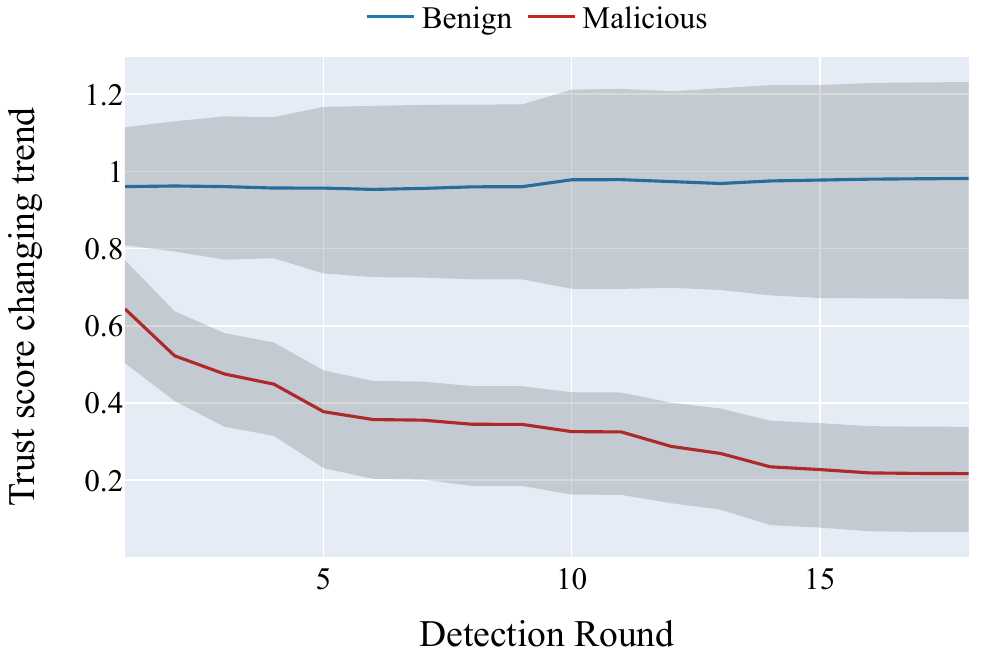}
	\caption{The changing trend of clients' trust score in {\sf RECESS}.
	The task is CIFAR-10.
	The model poisoning attack is the optimization attack.
	}
	\label{opti_Score}
\end{figure}

\begin{figure}
	\centering
	\includegraphics[width=\linewidth]{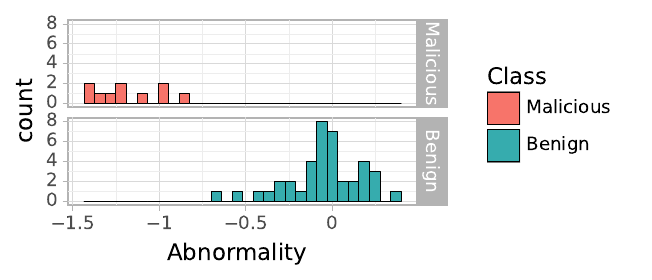}
	\caption{
	The histogram of clients' $Abnormality$ value in {\sf RECESS}.
	The task is CIFAR-10.
	The model poisoning attack is the optimization attack.
	The detection round is \#7.
	}
	\label{opti_dist}
\end{figure}

\subsection{Impact of FL Setting}
In this part, we evaluate {\sf RECESS} under various FL settings 
including the distribution of clients' datasets, the number of malicious clients, and the FL cases, 
to illustrate the practicability of our proposed scheme.

\begin{figure}[t]
	\centering
	\includegraphics[width=\linewidth]{}
	\caption{The impact of the local dataset's Non-IID degree on the FL defenses.
	The task is CIFAR-10.
	The model poisoning attack is AGR Min-Max.
	In the white-box (black-box) case, the attacker has (no) knowledge of other benign gradients.
	Our {\sf RECESS} achieves the best and most stable defense effect.
	}
	\label{Degree}
\end{figure}

\begin{figure}[t]
	\centering
	\includegraphics[width=\linewidth]{}
	\caption{The impact of the proportion of attackers on the FL defenses.
	The task is CIFAR-10.
	The model poisoning attack is AGR Min-Max.
	In the white-box (black-box) case, the attacker has (no) knowledge of other benign gradients.
	As the proportion increases, the accuracy of using our {\sf RECESS} decreases minimally.
	}
	\label{Number}
\end{figure}

%\begin{figure}
%  \begin{minipage}[t]{0.48\linewidth}
%    \centering
%    \includegraphics[width=1.1\linewidth]{fig/Degree}
%	\caption{The impact of the local dataset's Non-IID degree on the FL defenses.
%	The task is CIFAR-10.
%	The model poisoning attack is AGR Min-Max.
%%	In the white-box (black-box) case, the attacker has (no) knowledge of other benign gradients.
%	Our {\sf RECESS} achieves the best and most stable defense effect.
%	}
%	\label{Degree}
%  \end{minipage}%
%\hfill
%  \begin{minipage}[t]{0.48\linewidth}
%    \centering
%    \includegraphics[width=1.1\linewidth]{fig/Number}
%	\caption{The impact of the proportion of attackers on the FL defenses.
%	The task is CIFAR-10.
%	The model poisoning attack is AGR Min-Max.
%%	In the white-box (black-box) case, the attacker has (no) knowledge of other benign gradients.
%	As the proportion increases, the accuracy of using our {\sf RECESS} decreases minimally.
%	}
%	\label{Number}
%  \end{minipage}
%\end{figure}

\subsubsection{Non-IID Degree of Dataset (More Outliers).}\label{exp:31}
We evaluate {\sf RECESS} on different Non-IID degrees of datasets.
The results on Non-IID FEMNIST in Table \ref{tab:results} show the effectiveness of our {\sf RECESS} against latest poisoning attacks.
%the other results are mostly obtained on the IID datasets.
Besides, we also involve the CIFAR-10 in the evaluation using the Non-IID dataset construction method same as \cite{Cao2021FLTrustBF,Shejwalkar2021ManipulatingTB}.
With $M$ classes, clients split into $M$ groups. Class $i$ example assigned to group $i$ with probability $q$, and to any group with probability $(1-q)/(M-1)$. $q=1/M$ is IID, larger $q$ means more non-IID.
\iffalse
Therefore, we give the Non-IID dataset's construction method to involve the CIFAR-10 dataset in evaluation.
Suppose the class number of the dataset is $c$,
and we divide the dataset into $c$ sub-datasets for clients' local training.
The Non-IID degree $p$ is the probability of one data with label $l$ belonging to $l$-th sub-dataset, where $p \in (1/c,1]$. 
%which is different from \cite{fang2020local}.
Note in IID case $p = 1/c$.
% i.e., $p=0.1$ in IID CIFAR-10 of this test.
\fi
We consider four defenses (the best classic robust rules Trmean, SOTA FLtrust and DnC, and {\sf RECESS}) against the strongest AGR-agnostic Min-Max attack in white/black-box cases.

Fig. \ref{Degree} shows that the increase in the Non-IID degree leads to an expansion of benign outliers which are similar to malicious gradients,
, which greatly increases the difficulty of previous defenses.
As the Non-IID degree increases, outliers increase and the detection becomes more difficult,
%poisoning attacks become more powerful 
and previous defenses cannot effectively distinguish outliers from malicious gradients. 
However, it has little impact on {\sf RECESS}, since our method can accurately detect malicious gradients and benign outliers.

%\begin{figure}[t]
%	\centering
%	\includegraphics[width=\linewidth]{fig/Number}
%	\caption{The impact of the proportion of attackers on the FL defenses.
%	The task is CIFAR-10.
%	The model poisoning attack is AGR Min-Max.
%%	In the white-box (black-box) case, the attacker has (no) knowledge of other benign gradients.
%	As the proportion increases, the accuracy of using our {\sf RECESS} decreases minimally.
%	}
%	\label{Number}
%\end{figure}

%\begin{figure}[t]
%	\centering
%	\includegraphics[width=\linewidth]{fig/Degree}
%	\caption{The impact of degree of Non-IID of local dataset on the FL defenses.
%	The task is CIFAR-10.
%	The model poisoning attack is AGR Min-Max.
%	In the white-box (black-box) case, the attacker has (no) knowledge of other benign gradients.
%	Our {\sf RECESS} achieves the best and most stable defense effect.
%	}
%	\label{Degree}
%\end{figure}

\subsubsection{Number of Malicious Clients.}

Fig. \ref{Number} shows that
%As the number of malicious clients increases,
%all poisoning attacks are more powerful.
%%Our {\sf RECESS} is still effective.
%But, {\sf RECESS} is still effective against the SOTA poisoning attacks with a larger number of malicious clients.  
RECESS remains an outstanding defending effect all along, as the number of malicious clients increases,
all poisoning attacks are more powerful, 
while the defending effect of other defenses decreases sharply.
We vary the proportion of malicious clients from 5\% to 25\%, consistent with \cite{Shejwalkar2021ManipulatingTB}.
The other settings remain the same as Section \ref{exp:31}.

%\begin{figure}[t]
%	\centering
%	\includegraphics[width=\linewidth]{fig/Number}
%	\caption{The impact of the proportion of attackers on the FL defenses.
%	The task is CIFAR-10.
%	The model poisoning attack is AGR Min-Max.
%	In the white-box (black-box) case, the attacker has (no) knowledge of other benign gradients.
%	As the proportion increases, the accuracy of using our {\sf RECESS} decreases minimally.
%	}
%	\label{Number}
%\end{figure}

\subsubsection{Cross-device FL.}
Previous evaluations are mostly under cross-silo settings except for the FEMNIST which is naturally cross-device shown in Table \ref{tab:results}.
Further, we consider the cross-device setting using the dataset CIFAR-10. 
In each epoch, the server stochastically selects 10 updates from all clients to aggregate.
The attack steps remain the same, and the other settings are the same as the default.

Table \ref{FLcase} shows that similar to the result of cross-silo, {\sf RECESS} still achieves the best defense effect.
Besides, the poisoning impacts under the cross-device setting are lower than the one of the cross-silo setting, because the server selects less number of clients for aggregation and ignores more malicious gradients in cross-device FL, thus the attacker cannot continuously poison, weakening the impact of the poisoning.
Consequently, this also leaves less improvement space for our {\sf RECESS} and other defenses.

\subsubsection{Deeper Models.}
To demonstrate scalability, we conducted experiments on larger models including DNN, VGG, and ResNet. As shown in Table \ref{tab:deepmodel}, our method achieves consistent utility for these complex models.
These results highlight the general applicability of our framework across model depths.

% Please add the following required packages to your document preamble:
% \usepackage{multirow}
% \usepackage{graphicx}
\begin{table}
\caption{Results on more complex models.}
\label{tab:deepmodel}
\resizebox{\columnwidth}{!}{%
\begin{tabular}{ccccc}
\hline
\textbf{Dataset}          & \textbf{Model}            & \textbf{Attacks} & \textbf{FedAvg} & \textbf{RECESS} \\ \hline
\multirow{2}{*}{MNIST}    & \multirow{2}{*}{DNN}      & No Attack        & \textbf{0.9487} & 0.9405          \\
 &  & Optimization attack & 0.6873 & \textbf{0.9314} \\ \cline{2-5} 
\multirow{6}{*}{CIFAR-10} & \multirow{3}{*}{ResNet20} & No Attack        & \textbf{0.8449} & 0.8217          \\
 &  & Optimization attack & 0.1718 & \textbf{0.8173} \\
 &  & AGR-tailored Attack & 0.1544 & \textbf{0.8014} \\ \cline{3-5} 
                          & \multirow{3}{*}{VGG11}    & No Attack        & \textbf{0.7515} & 0.7408          \\
 &  & Optimization attack & 0.5032 & \textbf{0.7344} \\
 &  & AGR-tailored Attack & 0.3834 & \textbf{0.7037} \\ \hline
\end{tabular}%
}
\end{table}

\subsection{Defensive Effectiveness of Other Poisoning Attacks}
Though this work focuses primarily on untargeted model poisoning, along with previous SOTA works,
{\sf RECESS} detects inconsistency between client behaviors, so it could mitigate targeted/untargeted attacks from clients with backdoored/poisoned data by replacing weighted averaging with a Byzantine-robust aggregation like Median. 

To demonstrate {\sf RECESS}'s versatility, we evaluated various attacks, including data poisoning (label flipping) and targeted backdoor (scaling attack).
The results in Table \ref{tab:backdoor} show that {\sf RECESS} with Median effectively defends against the label flipping attack and scaling attack.
{\sf RECESS} with Median also achieves comparable accuracy to the normal model under no attack and resists the backdoor.

\begin{table*}
\centering
\caption{FL accuracy with RECESS against other types of poisoning attacks.}
\label{tab:backdoor}
\resizebox{0.85\linewidth}{!}{%
\begin{tabular}{cccccccc}
\hline
\textbf{Datasets} &
  \textbf{Attacks} &
  \textbf{Metrics} &
  \textbf{FedAvg} &
  \textbf{FLTrust} &
  \textbf{DnC} &
  \textbf{RECESS} &
  \textbf{\begin{tabular}[c]{@{}c@{}}RECESS with \\ Median\end{tabular}} \\ \hline
\multirow{4}{*}{MNIST}    & No Attack                       & Model Accuracy      & \textbf{0.9621} & 0.9584       & 0.9601 & 0.9598 & 0.9581          \\
                          & Label Flipping                  & Model Accuracy      & 0.9018          & 0.9448       & 0.9225 & 0.9154 & \textbf{0.9548} \\
                          & \multirow{2}{*}{Scaling Attack} & Model Accuracy      & \textbf{0.9542} & 0.9428       & 0.9518 & 0.9539 & 0.9519          \\
                          &                                 & Attack Success Rate & 1.0             & \textbf{0.0} & 1.0    & 1.0    & \textbf{0.0}    \\ \cline{2-8} 
\multirow{4}{*}{CIFAR-10} & No Attack                       & Model Accuracy      & \textbf{0.6605} & 0.6341       & 0.6409 & 0.6554 & 0.6383          \\
                          & Label Flipping                  & Model Accuracy      & 0.4145          & 0.5748       & 0.4284 & 0.4415 & \textbf{0.6148} \\
                          & \multirow{2}{*}{Scaling Attack} & Model Accuracy      & \textbf{0.6329} & 0.6168       & 0.6123 & 0.6314 & 0.6228          \\
                          &                                 & Attack Success Rate & 1.0             & 0.0248       & 0.6412 & 1.0    & \textbf{0.0}    \\ \hline
\end{tabular}%
}
\end{table*}

The setting of the label flipping attack is the same as \cite{fang2020local} and the scaling’s setting is consistent with \cite{bagdasaryan2020backdoor}. The accuracy of the model using {\sf RECESS} with Median under all attacks rivals conventional models (using FedAvg under no attack). This shows {\sf RECESS} can effectively defend against diverse FL poisoning. While not demonstrated originally, we respectfully suggest our technique is capable of tackling these attacks straightforwardly.

\subsection{Sensitivity of Hyperparameters}\label{sec:sensitivity}
The setting of the detection frequency needs to consider the specific task and attack strength. 
In our tests, we set $TS_0 = 1$ and $baseline\_decreased\_score = 0.1$, considering that performing $10$ detections is sufficient to confirm the malicious updates. 
In particular, the stronger the attack is, the easier it is to be detected in early detection, resulting in a particularly fast decline in its trust score and thus less impact on aggregation. 
Here we perform more fine-grained analysis of the hyperparameters setting.

\subsubsection{Deduction Speed of Trust Score.}
The detection speed is controlled by the initial trust score and the baseline decreased score each round.
To analyze the effect of the rate of trust score decline on the recess mechanism, given an initial trust score of $1.0$, i.e., $TS_0=1.0$, we vary the $baseline\_decreased\_score$ to $0.05$, $0.1$, and $0.2$ to see the performance of recess detection.

Fig. \ref{fig:changepara} shows that the larger $baseline\_decreased\_score$, the fewer rounds required for detection, and the faster the detection speed. However, when the $baseline\_decreased\_score$ is set large, the client's trust score drops too fast, which also leads to benign users being easily affected, so in this work when the initial score is $1.0$, we choose $baseline\_decreased\_score=0.1$, which takes a tradeoff in terms of detection speed and stability.	

\begin{figure*}[h]
	\centering
	\begin{subfigure}[t]{0.32\textwidth}
		\centering
		\includegraphics[width=\textwidth]{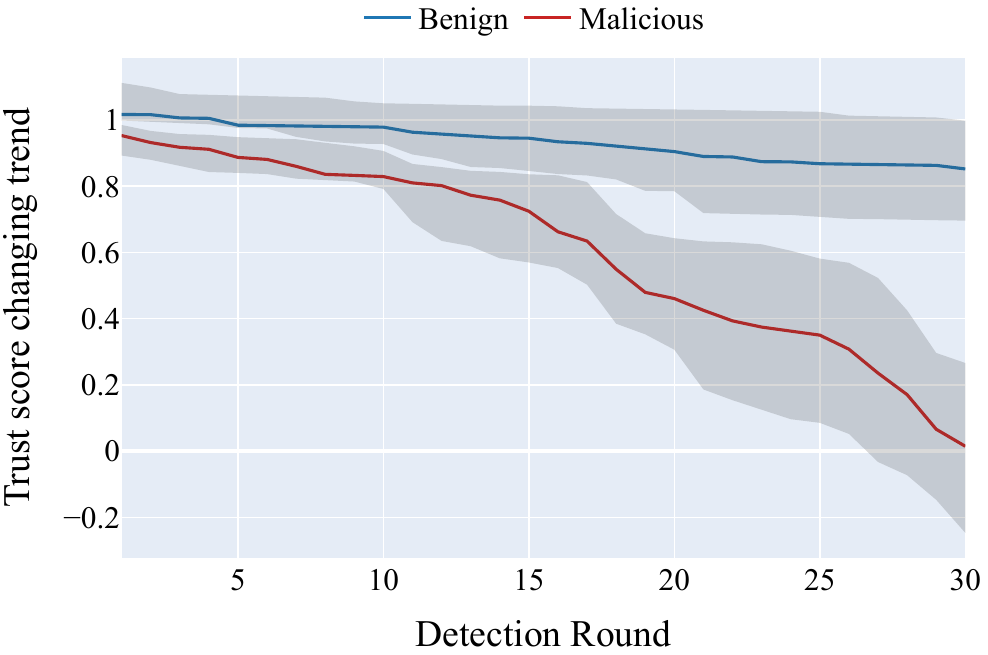}
		\caption{$baseline\_decreased\_score = 0.05$}
	\end{subfigure}
	\hfill
	\begin{subfigure}[t]{0.32\textwidth}
		\centering
		\includegraphics[width=\textwidth]{fig/score/AGR_Score_changing}
		\caption{$baseline\_decreased\_score = 0.1$}
	\end{subfigure}
	\hfill
	\begin{subfigure}[t]{0.32\textwidth}
		\centering
		\includegraphics[width=\textwidth]{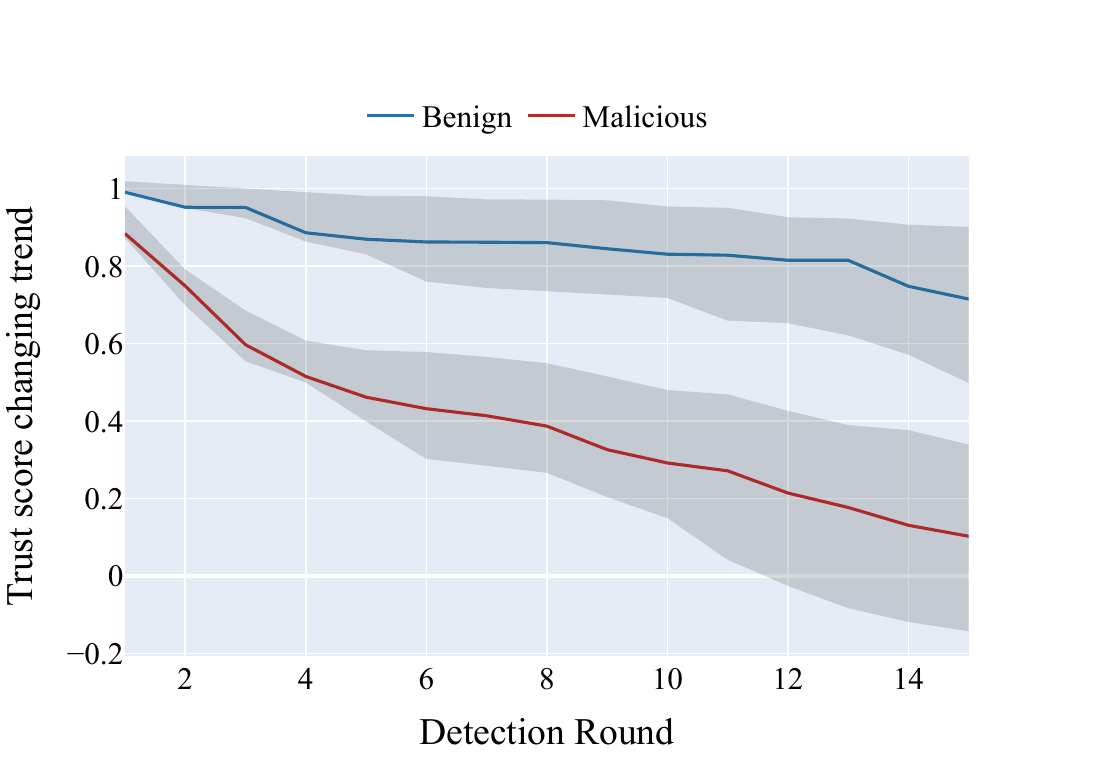}
		\caption{$baseline\_decreased\_score = 0.2$}
	\end{subfigure}
	\caption{
	The comparison of trust score changing trend with varying $baseline\_decreased\_score$.
	We set the initial trust score to $1.0$.
	Each result is averaged over multiple repetitions.
	}
	\label{fig:changepara}
\end{figure*}

\subsubsection{Test Gradient Direction Adjustment Threshold.}\label{sec:threshold}
We use the cosine similarity to measure the change of the test gradient direction. Note that a larger similarity represents a smaller amount of adjustment in direction, e.g. the four cases of $<5$, $<15$, $<25$, and $<35$ degrees on the x-axis correspond to cosine similarities of $>0.9961$, $>0.9659$, $>0.9063$, and $>0.8191$, respectively.
We test the detection effect of {\sf RECESS} under different thresholds, including detection accuracy and concealment, respectively.
Here we directly determine the detection accuracy by predicting the client as benign or malicious, i.e., an client is considered an attacker if the value of $Abnormality$ is positive.
We also let the attacker perform the adaptive attack shown in Section \ref{sec:adaptive}, i.e., to check whether the received gradient is the test gradient, and the attacker's judgment success rate $A_S$ is used as the concealment indicator, i.e.,
$concealment = 1 - 2 \times |A_S - 0.5|$, since the attacker is blind guessing when $A_S$ is close to 0.5.

Fig. \ref{fig:threshold_sensitivity} shows the results. The smaller the perturbation threshold, the higher the detection accuracy, which means that the poisoning attack is easier to be detected.
This is because according to the attacker's optimization equation, the smaller the variation in the received aggregation gradient, the attacker tends to add more perturbations and thus is more likely to be detected. 
However, test gradients with small variations are easily screened by the attacker and thus are adaptively evaded, so we need to trade-off detection accuracy and concealment. In the main experiment we chose 18 degrees as the threshold $\mathcal{A}$, i.e., 0.9510, which is reasonable in practice.

\begin{figure}[h]
	\centering
	\includegraphics[width=0.8\linewidth]{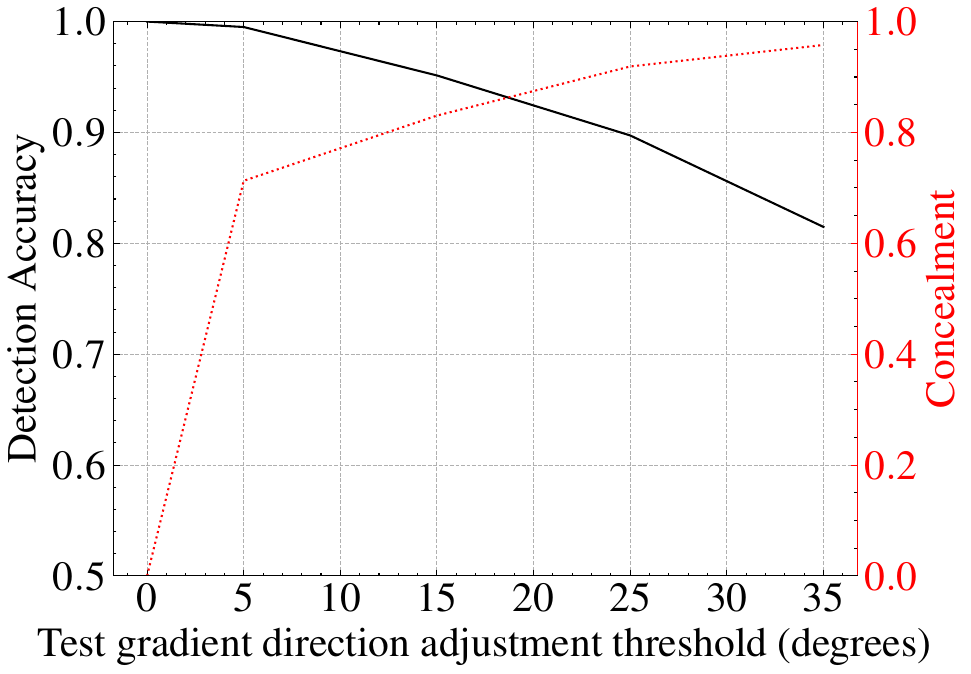}
	\caption{The results of threshold $\mathcal{A}$ sensitivity analysis.}
	\label{fig:threshold_sensitivity}
\end{figure}

%\paragraph{deduction strategy}
%
%\paragraph{Construction Strategy of Test Gradient.}
%
%normal test gradient
%adaptive test gradient
%momentum test gradient

\subsection{Adaptive Attacks}\label{sec:adaptive}
\subsubsection{Active Evasion.}
With knowledge of RECESS, the attacker checks if it's a test gradient before deciding to poison or not. The adaptive strategy involves estimating an aggregated gradient $g_p$ in each iteration from controlled benign clients. Then, the attacker compares the received aggregated gradient $g_{agg}$ with $g_p$ and $g_i^{k-1}$ from the last iteration. If cosine similarity $S_C (g_{agg}, g_i^{k-1}) \geq S_C (g_{agg}, g_p)$, the poisoning begins, while if $S_C (g_{agg}, g_i^{k-1}) < S_C (g_{agg}, g_p)$, the poisoning stops, as it's considered detected.

As a countermeasure, the defender boosts detection frequency and intersperses it during the entire training, instead of fixed consecutive detection, and increases the sensitivity of parameters. 
% Note the defense should thwart adaptive attacks and converge the model.
% The presence of {\sf RECESS} also raises attack costs.
% It is not necessary to find the attacker as long as the accuracy loss is negligible. 
% It is acceptable that the attacker evades the detection.
Note that the defense should aim to make the model converge, it is not necessary to find the attacker. In other words, {\sf RECESS}'s presence increases the attack cost, so even if the attacker evades detection, the loss of model accuracy is negligible, which is acceptable.

Table \ref{fig:adaptive} shows that the adaptive attack has little impact on the final model accuracy though it is stealthier than the continuous attack. This is due to the similarity between the test gradient and the normal aggregated gradient, leading to inaccurate estimation by the attacker and a decrease in aggregation weight. Also, increasing detection frequency lowers poisoning frequency, so the poisoning impact will be gradually alleviated by normal training over time.

\begin{table*}
\centering
\caption{
The impact of cross-device FL on the defenses.
%The task is CIFAR-10.
The model poisoning attack is AGR Min-Max.
%In the white-box (black-box) case, the attacker has (no) knowledge of other benign gradients.
%In cross-device FL, using our {\sf RECESS} achieves the best accuracy, which is consistent with the results of cross-silo FL shown in Table \ref{tab:results}.
}
\label{FLcase}
\resizebox{0.8\linewidth}{!}{%
\begin{tabular}{ccccccc} 
\hline
\multirow{2}{*}{Dataset} & \multirow{2}{*}{\begin{tabular}[c]{@{}c@{}}Attacker's\\ Knowledge\end{tabular}} & \multirow{2}{*}{Attacks} & \multicolumn{4}{c}{Defences} \\ 
\cline{4-7}
 &  &  & Trmean & FLTrust & DnC & RECESS \\ 
\hline
\multirow{11}{*}{CIFAR-10} & {\cellcolor[rgb]{0.875,0.875,0.875}} & {\cellcolor[rgb]{0.875,0.875,0.875}}No Attack & {\cellcolor[rgb]{0.875,0.875,0.875}}\textbf{0.6490} & {\cellcolor[rgb]{0.875,0.875,0.875}}0.6178 & {\cellcolor[rgb]{0.875,0.875,0.875}}0.6284 & {\cellcolor[rgb]{0.875,0.875,0.875}}0.6448 \\ 
\cline{3-7}
 & \multirow{5}{*}{White-box} & LIE & 0.5559 & 0.6101 & 0.6157 & \textbf{0.6228} \\
 &  & Optimization attack & 0.5655 & 0.6171 & 0.5913 & \textbf{0.6439} \\
 &  & AGR-tailored & 0.5080 & 0.4538 & 0.5915 & \textbf{0.6046} \\
 &  & AGR-agnostic Min-Max & 0.5548 & 0.5226 & 0.5478 & \textbf{0.6379} \\
 &  & AGR-agnostic Min-Sum & 0.5758 & 0.5964 & 0.5793 & \textbf{0.6393} \\ 
\cline{2-7}
 & \multirow{5}{*}{Black-box} & LIE & 0.5999 & 0.6137 & 0.6128 & \textbf{0.6353} \\
 &  & Optimization attack & 0.6168 & 0.6159 & 0.6209 & \textbf{0.6277} \\
 &  & AGR-tailored & 0.5509 & 0.5485 & ~0.5956 & \textbf{0.6142} \\
 &  & AGR-agnostic Min-Max & 0.5833 & 0.5715 & 0.6095 & \textbf{0.6416} \\
 &  & AGR-agnostic Min-Sum & 0.6068 & 0.5937 & 0.6074 & \textbf{0.6385} \\
\hline
\end{tabular}
}
\end{table*}

\begin{table*}
\centering
\caption{FL accuracy with enhanced {\sf RECESSS} defenses against the adaptive poisoning attack. The task is FEMNIST. The detection frequency is contolled by $TS_0$ and $baseline\_decreased\_score$.}
\label{fig:adaptive}
\resizebox{0.8\linewidth}{!}{%
\begin{tabular}{ccccccc} 
\hline
\multirow{2}{*}{\begin{tabular}[c]{@{}c@{}}Attacker's \\Knowledge\end{tabular}} & \multirow{2}{*}{Attacks} & \multicolumn{5}{c}{RECESS Detection Frequency} \\ 
\cline{3-7}
 &  & 0 & 10 & 50 & 100 & 200 \\ 
\hline
\multirow{3}{*}{white-box} & {\cellcolor[rgb]{0.753,0.753,0.753}}no attack & {\cellcolor[rgb]{0.753,0.753,0.753}}0.8235 & {\cellcolor[rgb]{0.753,0.753,0.753}}0.823 & {\cellcolor[rgb]{0.753,0.753,0.753}}0.8221 & {\cellcolor[rgb]{0.753,0.753,0.753}}0.8224 & {\cellcolor[rgb]{0.753,0.753,0.753}}0.8215 \\
 & optimization attack & 0.2182 & 0.7983 & 0.8048 & 0.8106 & 0.8114 \\
 & adaptive optimization attack & 0.2482 & 0.3456 & 0.6847 & 0.7815 & 0.8011 \\ 
\hline
\multirow{2}{*}{black-box} & optimization attack & 0.5152 & 0.7937 & 0.8045 & 0.8117 & 0.8048 \\
 & adaptive optimization attack & 0.5248 & 0.6481 & 0.7847 & 0.8048 & 0.8148 \\
\hline
\end{tabular}
}
\end{table*}

\subsubsection{Poisoning before the Algorithm Initialization.}
The detection basis is that the variance of the malicious gradient is larger than the benign gradient. Malicious gradients, solved by the optimization problem, are more inconsistent. 
It is also theoretically proven that the optimization problem amplifies this variance. Thus, {\sf RECESS} is unaffected by initial poisoned gradients as it does not change this basis. 
Besides, when the poisoned aggregated gradient was used to detect, most clients will be identified as malicious.
This violates the assumption of Byzantine robustness that requires more than 51\% of users to be benign. Hence, the defender can easily discern a potentially poisoned initial gradient based on the new abnormality definition of {\sf RECESS}.
%More details can be found in Figure \ref{fig:initial} in Appendix.

\subsubsection{Inconsistent attacks.}
Attackers can also first pretend as benign client to increase the trust score, and then provide the poisoning gradients. However, {\sf RECESS} detects such behavior inconsistencies over time. The trust scoring of {\sf RECESS} also incorporates delayed penalties for discrepancies between a client's current and past behaviors (shown in Section \ref{sec:scheme}). Additionally, three factors limit intermittent attacks' impact:
\begin{enumerate}[(a)]
	\item FL's self-correcting property means inconsistent poisoning is much less impactful. Attackers would not take this approach in practice.
	\item In real settings, clients participate briefly, often just once. Attackers cannot afford to waste rounds acting benign before attacking.
	\item Defenses aim to accurately detect malicious updates for model convergence. Even if poisoning temporarily evaded detection, attack efficacy would diminish significantly, making it no longer a serious security concern.
\end{enumerate}

%\subsection{Cost Analysis}

%\subsection{Impact of attack setting}
%\paragraph{Attackers know the {\sf RECESS}}
%\paragraph{Attackers use different attack frequency.}

\iffalse

\section{Discussion}

%the magnitude change of gradient can also be considered to improve the detection 

%\subsection{Future Work}

\paragraph{Test Strategies.}
In this work, we mainly propose a new defend angle, and the preliminary test gradient is based on the available benign gradients.
It is promising to propose carefully construction test strategies to enhance the detection effect.

\paragraph{Anomaly Detection.}
We propose a new robust aggregation mechanism for the FL system against the latest model poisoning attacks.
It is also interesting to find the threshold to discriminate malicious clients with large variations in gradient direction and magnitude.
In specific tasks, using efficient anomaly detection algorithms or secondary learners can achieve better detection. 

\paragraph{Model Poisoning Attack.}
In our experiment, although some attacks demonstrate strong capabilities, they are quite time-consuming, such as the AGR-tailored attack, which can be improved.
We also leave the adaptive attack in future, where the attacker knows the existence of {\sf RECESS}.

\paragraph{Detection Cost.}
The main cost of {\sf RECESS} is the communication rounds since our proactive detection takes up a portion of the iterations.
In future work, it is promising to hide the detection in regular iterative updates.
This more stealthy manner would reduce significant communication overhead and make the attacker imperceptible.

\fi

\section{Discussion}

%\textbf{Compliance.}
%All clients just follow normal FL training without any extra rules to obey.
%Our threat model is consistent with previous works (FLTrust, DnC) and does not constrain or weaken the attacker. 
%{\sf RECESS} is running on the server side and considers various attack behaviors including cheating.

%\textbf{Novelty.}
%To the best of our knowledge, RECESS is the first defense consisting of proactive detection and robustness aggregation. 
%Considering the lack of similar schemes in the literature, we demonstrate its effectiveness by comparing it with the SOTA defenses in detection and aggregation, respectively.

\textbf{Cost.}
As the first proactive detection, {\sf RECESS} is expected to be deployed on the server-side without the overhead of local clients, and the main cost is the number of iterations used for detection because our proactive detection needs clients to cooperate, thus consuming certain iterations.
For $r$ iterations of detection, the communication and computational cost are $r$ and $n*r$ respectively, where $n$ is the number of clients with customized test gradients. In practice, we find that $r$ is a small number thanks to our effective detection, e.g., $10$ is sufficient for most cases, which is negligible for the overall convergence (usually $500-1500$).

\textbf{Scalability.}
{\sf RECESS} can adapt to different FL setting in which the server gets the local model updates and then aggregate them directly.    
This adaptation can be achieved by converting uploaded models into gradients, since sending model updates is considered equivalent to sending gradients.

Besides, it is an arms race to develop defenses against new emerging attacks including backdoor. 
It can be considered from two perspectives: stealthiness and attack effect, so the corresponding defenses need to consider both detection and aggregation \cite{gong2022backdoor}. 
Thus, RECESS is a promising method of detecting new attacks including the backdoor since it has been improved on both proactive detection and robustness aggregation. For more stealthy attacks, RECESS can lengthen the detection window and also make the detection more sensitive to gradient changes by adjusting the magnitude of test gradients. For attacks with better effect, RECESS can cope with the latest attack, and the robustness aggregation can effectively limit the effect of malicious gradients on aggregation.

%\paragraph{Adaptive attacks.}
% It is non-trivial to specifically devise an adaptive attack to evade RECESS. This is because our long-term trust score-based aggregation can effectively deter the attacker from submitting poisoned updates even if RECESS is already known. Even for the case where the attacker submits gradients with poor actual poisoning effect considering the trust score, it still barely interferes with our robustness aggregation. We consider the non-trivial design of adaptive attacks as interesting future work.

%\textbf{Extension.}
%This work primarily focuses on untargeted model poisoning, along with previous SOTA works. We believe that model poisoning has a significant impact on real-world FL deployments for three reasons:
%\begin{enumerate}[(a)]
%	\item Model poisoning through direct gradient manipulation poses a greater threat than backdoors in FL.
%	\item Backdoors are considered special cases of model poisoning, as malicious gradients can be obtained on poisoning datasets. However, it is challenging to solely imitate model poisoning gradients through training backdoors.
%	\item Untargeted poisoning has a more severe impact on overall model performance degradation compared to backdoors in FL.
%\end{enumerate}
%As RECESS detects inconsistencies between client behaviors, it could mitigate targeted attacks from clients with backdoored data by deploying additional strategies.
%%We will delve into a detailed discussion and analysis of these aspects in future work.

\section{Conclusion}

In this work, we proposed {\sf RECESS}, a novel defense 
%new robust gradient aggregation scheme 
for FL against the latest model poisoning attacks.
We shifted the classic reactive analysis to proactive detection and 
offered a more robust aggregation mechanism for the server.
In the comprehensive evaluation, {\sf RECESS} achieves better performance than SOTA defenses and solves the outlier detection problem that previous methods can not handle.
We anticipate that our {\sf RECESS} will provide a new research direction for poisoning attack defense and promote the application of highly robust FL in practice.

\bibliographystyle{IEEEtran} 
\bibliography{PoisonFL}

% Generated by IEEEtran.bst, version: 1.14 (2015/08/26)
\begin{thebibliography}{10}
\providecommand{\url}[1]{#1}
\csname url@samestyle\endcsname
\providecommand{\newblock}{\relax}
\providecommand{\bibinfo}[2]{#2}
\providecommand{\BIBentrySTDinterwordspacing}{\spaceskip=0pt\relax}
\providecommand{\BIBentryALTinterwordstretchfactor}{4}
\providecommand{\BIBentryALTinterwordspacing}{\spaceskip=\fontdimen2\font plus
\BIBentryALTinterwordstretchfactor\fontdimen3\font minus
  \fontdimen4\font\relax}
\providecommand{\BIBforeignlanguage}[2]{{%
\expandafter\ifx\csname l@#1\endcsname\relax
\typeout{** WARNING: IEEEtran.bst: No hyphenation pattern has been}%
\typeout{** loaded for the language `#1'. Using the pattern for}%
\typeout{** the default language instead.}%
\else
\language=\csname l@#1\endcsname
\fi
#2}}
\providecommand{\BIBdecl}{\relax}
\BIBdecl

\bibitem{li2020federated}
T.~Li, A.~K. Sahu, A.~Talwalkar, and V.~Smith, ``Federated learning:
  Challenges, methods, and future directions,'' \emph{IEEE Signal Processing
  Magazine}, vol.~37, no.~3, pp. 50--60, 2020.

\bibitem{yang2019ffd}
W.~Yang, Y.~Zhang, K.~Ye, L.~Li, and C.-Z. Xu, ``Ffd: A federated learning
  based method for credit card fraud detection,'' in \emph{International
  conference on big data}.\hskip 1em plus 0.5em minus 0.4em\relax Springer,
  2019, pp. 18--32.

\bibitem{xu2021federated}
J.~Xu, B.~S. Glicksberg, C.~Su, P.~Walker, J.~Bian, and F.~Wang, ``Federated
  learning for healthcare informatics,'' \emph{Journal of Healthcare
  Informatics Research}, vol.~5, no.~1, pp. 1--19, 2021.

\bibitem{truong2021privacy}
N.~Truong, K.~Sun, S.~Wang, F.~Guitton, and Y.~Guo, ``Privacy preservation in
  federated learning: An insightful survey from the gdpr perspective,''
  \emph{Computers \& Security}, vol. 110, p. 102402, 2021.

\bibitem{lackey2021data}
D.~A. Lackey, ``Data protection and privacy in canada: A balanced approach,''
  \emph{Applied Marketing Analytics}, vol.~6, no.~4, pp. 366--376, 2021.

\bibitem{mcmahan2017communication}
B.~McMahan, E.~Moore, D.~Ramage, S.~Hampson, and B.~A. y~Arcas,
  ``Communication-efficient learning of deep networks from decentralized
  data,'' in \emph{Artificial intelligence and statistics}.\hskip 1em plus
  0.5em minus 0.4em\relax PMLR, 2017, pp. 1273--1282.

\bibitem{guerraoui2018hidden}
R.~Guerraoui, S.~Rouault \emph{et~al.}, ``The hidden vulnerability of
  distributed learning in byzantium,'' in \emph{International Conference on
  Machine Learning}.\hskip 1em plus 0.5em minus 0.4em\relax PMLR, 2018, pp.
  3521--3530.

\bibitem{bhagoji2019analyzing}
A.~N. Bhagoji, S.~Chakraborty, P.~Mittal, and S.~Calo, ``Analyzing federated
  learning through an adversarial lens,'' in \emph{International Conference on
  Machine Learning}.\hskip 1em plus 0.5em minus 0.4em\relax PMLR, 2019, pp.
  634--643.

\bibitem{blanchard2017machine}
P.~Blanchard, E.~M. El~Mhamdi, R.~Guerraoui, and J.~Stainer, ``Machine learning
  with adversaries: Byzantine tolerant gradient descent,'' \emph{Advances in
  Neural Information Processing Systems}, vol.~30, 2017.

\bibitem{yin2018byzantine}
D.~Yin, Y.~Chen, R.~Kannan, and P.~Bartlett, ``Byzantine-robust distributed
  learning: Towards optimal statistical rates,'' in \emph{International
  Conference on Machine Learning}.\hskip 1em plus 0.5em minus 0.4em\relax PMLR,
  2018, pp. 5650--5659.

\bibitem{Cao2021FLTrustBF}
X.~Cao, M.~Fang, J.~Liu, and N.~Z. Gong, ``Fltrust: Byzantine-robust federated
  learning via trust bootstrapping,'' in \emph{NDSS}, 2021.

\bibitem{Shejwalkar2021ManipulatingTB}
V.~Shejwalkar and A.~Houmansadr, ``Manipulating the byzantine: Optimizing model
  poisoning attacks and defenses for federated learning,'' in \emph{NDSS},
  2021.

\bibitem{fang2020local}
M.~Fang, X.~Cao, J.~Jia, and N.~Gong, ``Local model poisoning attacks to
  $\{$Byzantine-Robust$\}$ federated learning,'' in \emph{29th USENIX Security
  Symposium (USENIX Security 20)}, 2020, pp. 1605--1622.

\bibitem{bagdasaryan2020backdoor}
E.~Bagdasaryan, A.~Veit, Y.~Hua, D.~Estrin, and V.~Shmatikov, ``How to backdoor
  federated learning,'' in \emph{International Conference on Artificial
  Intelligence and Statistics}.\hskip 1em plus 0.5em minus 0.4em\relax PMLR,
  2020, pp. 2938--2948.

\bibitem{mahloujifar2019universal}
S.~Mahloujifar, M.~Mahmoody, and A.~Mohammed, ``Universal multi-party poisoning
  attacks,'' in \emph{International Conference on Machine Learning}.\hskip 1em
  plus 0.5em minus 0.4em\relax PMLR, 2019, pp. 4274--4283.

\bibitem{xie2020fall}
C.~Xie, O.~Koyejo, and I.~Gupta, ``Fall of empires: Breaking byzantine-tolerant
  sgd by inner product manipulation,'' in \emph{Uncertainty in Artificial
  Intelligence}.\hskip 1em plus 0.5em minus 0.4em\relax PMLR, 2020, pp.
  261--270.

\bibitem{jagielski2018manipulating}
M.~Jagielski, A.~Oprea, B.~Biggio, C.~Liu, C.~Nita-Rotaru, and B.~Li,
  ``Manipulating machine learning: Poisoning attacks and countermeasures for
  regression learning,'' in \emph{2018 IEEE Symposium on Security and Privacy
  (SP)}.\hskip 1em plus 0.5em minus 0.4em\relax IEEE, 2018, pp. 19--35.

\bibitem{munoz2017towards}
L.~Mu{\~n}oz-Gonz{\'a}lez, B.~Biggio, A.~Demontis, A.~Paudice, V.~Wongrassamee,
  E.~C. Lupu, and F.~Roli, ``Towards poisoning of deep learning algorithms with
  back-gradient optimization,'' in \emph{Proceedings of the 10th ACM workshop
  on artificial intelligence and security}, 2017, pp. 27--38.

\bibitem{baruch2019little}
G.~Baruch, M.~Baruch, and Y.~Goldberg, ``A little is enough: Circumventing
  defenses for distributed learning,'' \emph{Advances in Neural Information
  Processing Systems}, vol.~32, 2019.

\bibitem{cao2022mpaf}
X.~Cao and N.~Z. Gong, ``Mpaf: Model poisoning attacks to federated learning
  based on fake clients,'' in \emph{Proceedings of the IEEE/CVF Conference on
  Computer Vision and Pattern Recognition}, 2022, pp. 3396--3404.

\bibitem{shejwalkar2022back}
V.~Shejwalkar, A.~Houmansadr, P.~Kairouz, and D.~Ramage, ``Back to the drawing
  board: A critical evaluation of poisoning attacks on production federated
  learning,'' in \emph{2022 IEEE Symposium on Security and Privacy (SP)}.\hskip
  1em plus 0.5em minus 0.4em\relax IEEE, 2022, pp. 1354--1371.

\bibitem{bottou2010large}
L.~Bottou, ``Large-scale machine learning with stochastic gradient descent,''
  in \emph{Proceedings of COMPSTAT'2010}.\hskip 1em plus 0.5em minus
  0.4em\relax Springer, 2010, pp. 177--186.

\bibitem{zhao2018federated}
Y.~Zhao, M.~Li, L.~Lai, N.~Suda, D.~Civin, and V.~Chandra, ``Federated learning
  with non-iid data,'' \emph{arXiv preprint arXiv:1806.00582}, 2018.

\bibitem{zhu2021federated}
H.~Zhu, J.~Xu, S.~Liu, and Y.~Jin, ``Federated learning on non-iid data: A
  survey,'' \emph{Neurocomputing}, vol. 465, pp. 371--390, 2021.

\bibitem{lecun1998gradient}
Y.~LeCun, L.~Bottou, Y.~Bengio, and P.~Haffner, ``Gradient-based learning
  applied to document recognition,'' \emph{Proceedings of the IEEE}, vol.~86,
  no.~11, pp. 2278--2324, 1998.

\bibitem{krizhevsky2009learning}
A.~Krizhevsky, G.~Hinton \emph{et~al.}, ``Learning multiple layers of features
  from tiny images,'' 2009.

\bibitem{c:2}
{Kaggle}, ``Acquire valued shoppers challenge,''
  \url{https://www.kaggle.com/c/acquire-valued-shoppers-challenge/data}, 2014,
  accessed: 2021-12.

\bibitem{kairouz2021advances}
P.~Kairouz, H.~B. McMahan, B.~Avent, A.~Bellet, M.~Bennis, A.~N. Bhagoji,
  K.~Bonawitz, Z.~Charles, G.~Cormode, R.~Cummings \emph{et~al.}, ``Advances
  and open problems in federated learning,'' \emph{Foundations and
  Trends{\textregistered} in Machine Learning}, vol.~14, no. 1--2, pp. 1--210,
  2021.

\bibitem{krizhevsky2012imagenet}
A.~Krizhevsky, I.~Sutskever, and G.~E. Hinton, ``Imagenet classification with
  deep convolutional neural networks,'' \emph{Advances in neural information
  processing systems}, vol.~25, 2012.

\bibitem{zhang2022fldetector}
Z.~Zhang, X.~Cao, J.~Jia, and N.~Z. Gong, ``Fldetector: Defending federated
  learning against model poisoning attacks via detecting malicious clients,''
  in \emph{Proceedings of the 28th ACM SIGKDD Conference on Knowledge Discovery
  and Data Mining}, 2022, pp. 2545--2555.

\bibitem{gong2022backdoor}
X.~Gong, Y.~Chen, Q.~Wang, and W.~Kong, ``Backdoor attacks and defenses in
  federated learning: State-of-the-art, taxonomy, and future directions,''
  \emph{IEEE Wireless Communications}, 2022.

\end{thebibliography}

\balance

%\end{CJK*}
\end{document}